\theoremstyle{plain}
\newtheorem{lemma}{Lemma}
\newtheorem{theorem}{Theorem}
\newtheorem{proposition}{Proposition}
\newtheorem{corollary}{Corollary}
\theoremstyle{definition}
\newcommand{\s}{\overline{s}}
\newcommand{\cH}{{\mathcal{H}}}
\newcommand{\cP}{{\mathcal{P}}}
\newcommand{\cA}{{\mathcal{A}}}
\newcommand{\Q}{{\mathcal{Q}}}
\newcommand{\N}{{\mathbb{N}}}
\newcommand{\Oh}{\mathcal{O}}
\newcommand{\probname}[1]{\textsc{\lowercase{#1}}\xspace}
\newcommand{\WSP}[1]{\textsc{wsp}#1\xspace}
\newcommand{\FPT}{\ensuremath{\mathsf{FPT}}\xspace}
\newcommand{\NP}{\ensuremath{\mathsf{NP}}\xspace}
\begin{document}

\title{Polynomial Kernels and User Reductions for the Workflow Satisfiability Problem\thanks{An extended abstract appears in the proceedings of IPEC 2014.}}

\author{Gregory Gutin\footnote{Royal Holloway, University of London, UK, \texttt{$\{$G.Gutin$,$Magnus.Wahlstrom$\}$@rhul.ac.uk}} \and Stefan Kratsch\footnote{TU Berlin, Germany, \texttt{stefan.kratsch@tu-berlin.de}} \and Magnus Wahlstr{\"o}m\footnotemark[2]}

\maketitle

\begin{abstract}
The \probname{workflow satisfiability problem} (\WSP{}) is a problem of practical interest that arises whenever tasks need to be performed by authorized users, subject to constraints defined by business rules.  We are required to decide whether there exists a \emph{plan} -- an assignment of tasks to authorized users -- such that all constraints are satisfied.

The \WSP{} is, in fact, the \probname{conservative Constraint Satisfaction Problem} (i.e., for each variable, here called \emph{task}, we have a unary authorization constraint) and is, thus, \NP-complete. It was observed by Wang and Li (2010) that the number $k$ of tasks is often quite small and so can be used as a parameter, and several subsequent works have studied the parameterized complexity of \WSP{} regarding parameter~$k$.

We take a more detailed look at the kernelization complexity of \WSP{($\Gamma$)} when~$\Gamma$ denotes a finite or infinite set of allowed constraints. Our main result is a dichotomy for the case that all constraints in~$\Gamma$ are regular: (1) We are able to reduce the number~$n$ of users to~$n'\leq k$. This entails a kernelization to size poly$(k)$ for finite~$\Gamma$, and, under mild technical conditions, to size poly$(k+m)$ for infinite~$\Gamma$, where~$m$ denotes the number of constraints. (2) Already~\WSP{(\uppercase{$R$})} for some~$R\in\Gamma$ allows no polynomial kernelization in~$k+m$ unless the polynomial hierarchy collapses.
\end{abstract}

\section{Introduction} \label{sec:intro}

A business process is a collection of interrelated tasks that are performed by users in order to achieve some objective.
In many situations, a task can be performed only by certain \emph{authorized} users; formally, every task is accompanied by an \emph{authorization list} of all users who are authorized to perform the task. Additionally, either because of the particular requirements of the business logic or security requirements, we may require that certain sets of tasks cannot be performed by some sets of users~\cite{Cr05}. 
Such restrictions are referred to as \emph{constraints}, and may include rules such as \emph{separation-of-duty} (also known as the ``two-man'' rule), which may be used to prevent sensitive combinations of tasks being performed by a single user, and \emph{binding-of-duty}, which requires that a particular combination of tasks is performed by the same user.
The use of constraints in workflow management systems to enforce security policies has been studied extensively in the last fifteen years; see, e.g.,~\cite{BeFeAt99,Cr05,WangLi10}.

It is possible that the combination of constraints and authorization lists is ``unsatisfiable'', in the sense that there does not exist an assignment of users to tasks (called a \emph{plan}) such that all constraints are satisfied and every task is performed by an authorized user. A plan that satisfies all constraints and allocates an authorized user to each task is called \emph{valid}.
The \probname{workflow satisfiability problem} (\WSP{}) takes a workflow specification as input and returns a valid plan if one exists and \textsc{no} otherwise.
It is important to determine whether a business process is satisfiable or not, since an unsatisfiable one can never be completed without violating the security policy encoded by the constraints and authorization lists. 

Let us illustrate the above notions by the following simple instance $W^*$ of \WSP{}. Let the tasks be $s_1,s_2,s_3$, authorizations lists $A(s_1)=\{u_1,u_2,u_3\}$, $A(s_2)=\{u_1,u_4,u_5\}$, $A(s_3)=\{u_1,u_6\}$, a binding-of-duty constraint $s_1=s_2$  (meaning that $s_1$ and $s_2$ must be assigned the same user) and two separation-of-duty constraints $s_1\neq s_3$ and $s_2\neq s_3$. Note that the only valid plan is the assignment of $s_1$ and $s_2$ to $u_1$ and $s_3$ to $u_6.$

It is worth noting that \WSP{} is a special class of constraint satisfaction problems where for each variable $s$ (called a task in the \WSP{} language) we have an arbitrary unary constraint (called an authorization) that assigns possible values (called users) for $s$; this is called the conservative constraing satisfaction problem. Note, however, that while usually in constraint satisfactions problems the number of variables is much larger than the number of values,
for \WSP{} the number of tasks is usually much smaller than the number of users. It is important to remember that for \WSP{} we do not use the term 'constraint' for authorizations and so when we define special types of constraints, we do not extend these types to authorizations, which remain arbitrary.

Wang and Li  \cite{WangLi10} were the first to observe that the number $k$ of tasks is often quite small and so can be considered as a parameter. As a result, \WSP{} can be studied as a parameterized problem. Wang and Li  \cite{WangLi10}  proved that, in general, \WSP{} is $\mathsf{W[1]}$-hard, but \WSP{} is fixed-parameter tractable
(\FPT) if we consider 
some special types of practical constraints which include
separation-of-duty and binding-of-duty constraints. 

Crampton et al. \cite{CrGuYeJournal} found a faster fixed-parameter algorithm to solve the special cases of \WSP{} studied in \cite{WangLi10} and 
showed that the algorithm can be used for a wide family of constraints called regular (in fact, regular constraints include all constraints studied in  \cite{WangLi10}).
Subsequent research has demonstrated the existence of fixed-parameter algorithms for \WSP{} in the presence of other constraint types~\cite{CoCrGaGuJo13,CCGJR}. In particular, Cohen et al.~\cite{CoCrGaGuJo13} showed that \WSP{} with only so-called user-independent constraints is \FPT. A constraint $c$ on tasks $t_1,\ldots ,t_r$ is user-independent when for any tuples $u_{i_1},\ldots , u_{i_r}$ and $u_{j_1},\ldots , u_{j_r}$ of users such that $u_{i_p}=u_{i_q}$ if and only if $u_{j_p}=u_{j_q}$ for all $1\le p<q\le r$, $c$ is satisfied by the assignment of $t_{\ell}$ to $u_{i_{\ell}}$ for each $\ell \in [r]:=\{1,\ldots, r\}$ if and only if $c$ is satisfied by the assignment of $t_{\ell}$ to $u_{j_{\ell}}$ for each $\ell \in [r]$. Intuitively, given a satisfying assignment we may arbitrarily swap users for other users that are not presently assigned (and this may be iterated, giving arbitrary bijections). As an example, separation-of-duty and binding-of-duty are user-independent constraints.
Crampton et al.~\cite{CrGuYeJournal} also launched the study of polynomial and partially polynomial kernels (in the latter only the number of users is required to be bounded by a polynomial in $k$), 
but obtained results only for concrete families of constraints.

In this work, we explore the kernelization properties of \WSP{} in more detail. We focus on regular constraints, which are a special family of user-independent constraints, but since their definition is quite technical, we will defer it to  Section~\ref{sec:prels:wsp}.
We study both the possibility of polynomial kernels and of simplifying \WSP{} instances by reducing the set of users (i.e., partial kernels).%
\footnote{Such reductions are of interest by themselves as some practical \WSP{} algorithms iterate over users in search for a valid plan \cite{CoCrGaGuJo14}.} Our goal is to determine for which types of constraints such user-limiting reductions are possible, i.e., for which sets $\Gamma$ does the problem \WSP{($\Gamma$)} of \WSP{} restricted to using constraint types (i.e., relations) from $\Gamma$ admit a reduction to poly$(k)$ users? We study this question for both finite and infinite sets $\Gamma$ of regular constraints, and show a strong separation: Essentially, either every instance with $k$ tasks can be reduced to at most $k$ users, or there is no polynomial-time reduction to poly$(k)$ users unless the polynomial hierarchy collapses. (However, some technical issues arise for the infinite case.)

\paragraph{Our results.} Our main result is a dichotomy for the \WSP{($\Gamma$)} problem when $\Gamma$ contains only regular relations. We show two results. On the one hand, if every relation $R \in \Gamma$ is \emph{intersection-closed} (see Section \ref{section:dichotomy}), then we give a polynomial-time reduction which reduces the number of users in an instance to $n' \leq k$, without increasing the number of tasks $k$ or constraints $m$. This applies even if $\Gamma$ is infinite, given a natural assumption on computable properties of the relations. On the other hand, we show that given even a single relation $R$ which is regular but not intersection-closed, the problem \WSP{(\uppercase{$R$})} restricted to using only the relation $R$ admits no polynomial kernel, and hence no reduction to poly$(k)$ users, unless the polynomial hierarchy collapses.
For finite sets $\Gamma$, this gives a dichotomy in a straight-forward manner: For every finite set $\Gamma$ of regular relations, \WSP{($\Gamma$)} admits a polynomial kernel if every $R \in \Gamma$ is intersection-closed, and otherwise not unless the polynomial hierarchy collapses. 

However, for infinite sets~$\Gamma$ things get slightly more technical, for two reasons: (1) An instance with $k$ tasks and few users could still be exponentially large due to the number of constraints, analogously to the result that \probname{Hitting Set} admits no polynomial kernel parameterized by the size of the ground set \cite{DomLS09} (cf.~\cite{HermelinKSWW13}). (2) More degenerately, without any restriction on~$\Gamma$, an instance could be exponentially large simply due to the encoding size of a single constraint (e.g., one could interpret a complete \WSP{} instance on~$k$ tasks as a single constraint on these~$k$ tasks). Both these points represent circumstances that are unlikely to be relevant for practical \WSP{} instances. We make two restrictions to cope with this: (1) We allow the number~$m$ of constraints as an extra parameter, since it could be argued that~$m\leq \textrm{poly}(k)$ in practice. (2) We require that each constraint of arity~$r\leq k$ can be expressed by~poly$(r)$ bits. E.g., this allows unbounded arity forms of all standard constraints. Using this, we obtain a more general dichotomy: For any (possibly infinite) set $\Gamma$ of regular relations, \WSP{($\Gamma$)} admits a kernel of size poly($k+m$) if every $R \in \Gamma$ is intersection-closed, otherwise not, unless the polynomial hierarchy collapses.

Note that prior to our work there was no conjecture on how a polynomial kernel dichotomy for all regular constraints may look like (we cannot offer such a conjecture for the more general case of user-independent constraints).
The positive part follows by generalizing ideas of Crampton et al.~\cite{CrGuYeJournal}; 
the negative part is more challenging, and requires more involved arguments, especially to show the completeness of the dichotomy (i.e., that every relation $R$ which is regular but not intersection-closed can be used in our lower bounds proof; see Section~\ref{sec:regularlower}).

\paragraph{Organization.} We define \WSP{} formally and introduce a number of different constraint types, including regular constraints, in Section~\ref{section:preliminaries}. In Section~\ref{section:lowerbounds} we give several lower bounds for the kernelization of \WSP{($\Gamma$)}. In Section~\ref{section:dichotomy} we prove our main result, namely the dichotomy for regular constraints. We conclude in Section~\ref{section:conclusion}.

\section{Preliminaries}\label{section:preliminaries}

We define a \emph{workflow schema} to be a tuple~$(S,U,A,C)$, where~$S$ is the set of tasks in the workflow,~$U$ is the set of users,~$A\colon S\to 2^U$ assigns each task~$s\in S$ an \emph{authorization list}~$A(s)\subseteq U$, and~$C$ is a set of workflow constraints. For the instance $W^*$ of \WSP{} of the previous section, $S=\{s_1,s_2,s_3\},\ U=\{u_1,\ldots , u_6\},\ C=\{s_1=s_2,s_1\neq s_3,s_2\neq s_3\}$, and $A(s_1)=\{u_1,u_2,u_3\}$, $A(s_2)=\{u_1,u_4,u_5\}$, $A(s_3)=\{u_1,u_6\}$.
A \emph{workflow constraint} is a pair $c = (L,\Theta)$, where $L \subseteq S$ is the \emph{scope} of the constraint and $\Theta$ is a set of functions from $L$ to $U$ that specifies those assignments of elements of $U$ to elements of $L$ that \emph{satisfy} the constraint $c$. For the constraint $s_1=s_2$ above, $L=\{s_1,s_2\}$ and $\Theta=\{ L\rightarrow \{u_i\}:\ i\in \{1,\ldots, 6\}\}$.

Given $T \subseteq S$ and $X \subseteq U$, a \emph{plan} is a function $\pi\colon T \to X$; a plan $\pi\colon S \to U$ is called a \emph{complete plan}.
Given a workflow constraint $(L,\Theta)$, $T \subseteq S$, and $X \subseteq U$, a plan $\pi\colon T \to X$ \emph{satisfies} $(L,\Theta)$ if  either $L\setminus T\neq \emptyset$, or $\pi$ restricted to $L$ is contained in $\Theta$, i.e., defining $\pi_L \colon L \to U \colon s \mapsto \pi(s)$ we have $\pi_L \in \Theta$. 
A plan $\pi\colon T \to X$ is \emph{eligible} if $\pi$ satisfies every constraint in $C$. A plan $\pi\colon T \to X$ is \emph{authorized} if $\pi(s) \in A(s)$ for all $s \in T$.  A plan is \emph{valid} if it is complete, authorized and eligible. 
For an algorithm that runs on an instance $(S,U,A,C)$ of \WSP{}, we will measure the running time in terms of $n=|U|, k=|S|$, and $m=|C|$.

\subsection{WSP constraints and further notation} \label{sec:prels:wsp}

Let us first recall some concrete constraints that are of interest for this work:
\begin{description}
 \item[$(=,T,T'),(\neq,T,T')$:] These generalize the binary \emph{binding-of-duty} and \emph{separa\-tion-of-duty} constraints and were previously studied in~\cite{CrGuYeJournal,WangLi10}. They demand that there exist~$s\in T$ and~$s'\in T'$ which are assigned to the same (resp.\ different) users. We shorthand~$(s=s')$ and~$(s\neq s')$ if~$T=\{s\}$ and~$T'=\{s'\}$.
 \item[$(t_\ell,t_r,T)$:] A plan $\pi$ satisfies $(t_\ell,t_r,T)$, also called a \emph{tasks-per-user counting constraint}, if a user performs either no tasks in $T$ or between $t_\ell$ and $t_r$ tasks. Tasks-per-user counting constraints generalize the cardinality constraints which have been widely adopted by the \WSP{} community~\cite{ansi-rbac04,BeboFe01,JoBeLaGh05,SaCoFeYo96}.
 \item[$(\leq t,T),(\geq t,T)$:] These demand that the tasks in~$T$ are assigned to at most~$t$ (resp.\ at least~$t$) different users. They generalize \emph{binding-of-duty} and \emph{separa\-tion-of-duty}, respectively, and enforce security and diversity~\cite{CoCrGaGuJo14}.  
\end{description}
All these constraints share the property that satisfying them depends only on the partition of tasks that is induced by the plan. This property is referred to as \emph{user-independence}; see below.

\paragraph{Regular and user-independent constraints.} 
Formally, a constraint $(L,\Theta)$ is \emph{user-independent} if for any $\theta \in \Theta$  and any permutation $\psi \colon U \to U$, we have $\psi \circ \theta \in \Theta$. Note that this definition of user-independent constraints is equivalent to the (more informal) definition of these constraints given in the previous section.

For $T\subseteq S$ and $u\in U$ let $\pi\colon\ T\to u$ denote the plan that assigns every task of $T$ to $u$. A constraint $c=(L,\Theta)$ is \emph{regular} if it satisfies the following condition: For any partition $L_1,\ldots ,L_p$ of $L$ such that for every $i \in [p]=\{1,\ldots ,p\}$ there exists an eligible plan $\pi\colon L \rightarrow U$ and user $u$ such that $\pi^{-1}(u) = L_i$, the plan $\bigcup_{i =1}^p (L_i \to u_i)$, where all $u_i$'s are distinct, is eligible. Consider, as an example, a tasks-per-user counting constraint $(t_\ell,t_r,L)$. Let  $L_1,\ldots ,L_p$ be a partition of $L$ such that for every $i \in [p]$ there exists an eligible plan $\pi\colon L \rightarrow U$ and user $u$ such that $\pi^{-1}(u) = L_i$. Observe that for each $i\in [p],$ we have $t_{\ell} \leq |L_i|\leq t_r$ and so the plan $\bigcup_{i =1}^p (L_i \to u_i)$, where all $u_i$'s are distinct, is eligible. Thus, any tasks-per-user counting constraint $(t_\ell,t_r,L)$ is regular.
Regular constraints are a special class of user-independent constraints, but not every user-independent constraint is regular.
Crampton et al.\ \cite{CrGuYeJournal} show that constraints of the type $(\neq,T,T')$; $(=,T,T')$, where at least one of the sets $T,T'$ is a singleton are regular. In general, $(=, T, T')$ is not regular~\cite{CrGuYeJournal}: Consider, e.g.,~$(=,\{s_1,s_2\},\{s_3,s_4\})$, where we have eligible plans for every choice of assigning some $s_i$ to a private user, but assigning all four tasks to private users is ineligible.

Since regular constraints are of central importance to this paper, we introduce some further notation and terminology.
Below, we generally follow Crampton et al.~\cite{CrGuYeJournal}. 
Let $W=(S,U,A,C)$ be a workflow schema, and $\pi$ an eligible complete plan for $W$. 
Then $\sim_\pi$ is the equivalence relation on $S$ defined by $\pi$,
where $s \sim_\pi s'$ if and only if $\pi(s)=\pi(s')$. 
We let $S/\pi$ be the set of equivalence classes of $\sim_\pi$,
and for a task $s \in S$ we let $[s]_\pi$ denote the equivalence class containing $s$.

For a constraint $c=(L,\Theta)$, a set $T \subseteq L$ of tasks is \emph{$c$-eligible}
if there is a plan $\pi\colon L \to U$ that satisfies $c$, such that $T \in L/\pi$.
It is evident from the definition that $c$ is regular if and only if the following holds:
For every plan $\pi\colon L \to U$, $\pi$ satisfies $c$ if and only if every equivalence 
class $T \in L/\pi$ is $c$-eligible. 
In this sense, a regular constraint $c$ is entirely defined by the set of $c$-eligible sets of tasks.
It is clear that regular constraints are closed under conjunction, i.e., if $C$ is a set of regular constraints
on a set $T$ of tasks, then the set of plans $\pi \colon T \to U$ which are $c$-eligible for every $c \in C$
defines a new regular constraint $(T,\Theta)$. 

In a similar sense, if $c=(L,\Theta)$ is user-independent but not necessarily regular, 
then $c$ can be characterized on the level of partitions of $L$:
Let $\pi, \pi'\colon L \to U$ be two plans such that $L/\pi = L/\pi'$. 
Then either both $\pi$ and $\pi'$ are eligible for $c$, or neither is.
Overloading the above terminology, if $c$ is a user-independent constraint,
then we say that a partition $L/\pi$ is $c$-eligible if a plan $\pi$
generating the partition would satisfy the constraint.
We may thus refer to the partition $L/\pi$ itself as either eligible or ineligible. 
As with regular constraints, user-independent constraints are closed under conjunction. 

\paragraph{Describing constraints via relations.}
We will frequently describe constraint types in terms of relations. In the following, we restrict ourselves
to user-independent constraints. Let $R \subseteq \N^r$ be an $r$-ary relation, and $(s_1,\ldots,s_r) \in S^r$ a tuple of tasks,
with repetitions allowed (i.e., we may have $s_i=s_j$ for some $i \neq j$, $i,j \in [r]$). 
An \emph{application $R(s_1,\ldots,s_r)$ (of $R$)} is a constraint $(L, \Theta)$ where $L=\{s_i: i \in [r]\}$
and $\Theta=\{(\pi\colon L \to \N) : (\pi(s_1),\ldots,\pi(s_r)) \in R\}$. 
Here, we identify users $U=\{u_1,\ldots,u_n\}$ with integers $[n]=\{1,\ldots,n\}$.
We say that $R$ is \emph{user-independent (regular)} if every constraint $R(s_1,\ldots,s_n)$
resulting from an application of $R$ is user-independent (regular). 
In particular, a user-independent relation $R$ can be defined on the level of partitions,
in terms of whether each partition $L/\pi$ of its arguments is eligible or not,
and a regular relation can be defined in terms of eligible sets, as above.

Given a (possibly infinite) set $\Gamma$ of relations as above, a \emph{workflow schema over $\Gamma$}
is one where every constraint is an application of a relation $R \in \Gamma$,
and \WSP{($\Gamma$)} denotes the \WSP{} problem restricted to workflow schemata over $\Gamma$.
To cover cases of constraints of unbounded arity, we allow $\Gamma$ to be infinite.

For example, in the workflow schema $W^*$ of the previous section every constraint is an application of binary relations $=$ and $\neq$.

\paragraph{Well-behaved constraint sets.}
To avoid several degenerate cases associated with infinite sets $\Gamma$ we make some standard assumptions on our constraints. We say that a set $\Gamma$ of user-independent relations is \emph{well-behaved} if the following hold: (1) Every relation~$R\in\Gamma$ can be encoded using~$poly(r)$ bits, where~$r$ is the arity of~$R$; note that this does not include the space needed to specify the scope of an application of $R$. (2) For every application $c=(L,\Theta)$ of a relation $R \in \Gamma$, we can test in polynomial time whether a partition of $L$ is $c$-eligible; we can also test in polynomial time whether a set $S \subseteq L$ is $c$-eligible, and if not, then we can (if possible) find a $c$-eligible set $S'$ with $S \subset S' \subseteq L$. 
All relations corresponding to the concrete constraints mentioned above, are well-behaved.

\subsection{Parameterized complexity and kernelization}

A \emph{parameterized problem}~$\Q$ is a subset of~$\Sigma^*\times\N$ for some finite alphabet~$\Sigma$. A parameterized problem $\Q$ is \emph{fixed-parameter tractable} (\FPT) if there is a computable function $f\colon\N\to\N$ and an algorithm that, given $(x,k)$, takes time $f(k)|x|^{\Oh(1)}$ and correctly decides whether $(x,k)\in\Q$. A \emph{kernelization} of~$\Q$ is a polynomial-time computable function $K\colon (x,k) \mapsto (x',k')$ such that $(x,k) \in \Q$ if and only if $(x',k') \in \Q$, and such that $|x'|, k' \leq h(k)$ for some $h(k)$. Here, $(x,k)$ is an \emph{instance} of $\Q$, and $h(k)$ is the \emph{size} of the kernel. We say that $K$ is a \emph{polynomial kernelization} if~$h(k)=k^{\Oh(1)}$. For an introduction to parameterized complexity we refer to, e.g., \cite{DowneyF13}.

Our main tool for studying existence of polynomial kernels is kernelization-preserving reductions. Given two parameterized problems $\Q_1$ and $\Q_2$, a \emph{polynomial parametric transformation (PPT)} from $\Q_1$ to $\Q_2$ is a polynomial time computable function $\Psi \colon (x,k) \mapsto (x',k')$ such that for every input $(x,k)$ of $\Q_1$ we have $(x',k') \in \Q_2$ if and only if $(x,k) \in \Q_1$, and such that $k' \leq p(k)$ for some $p(k) = k^{\Oh(1)}$. Note that if $\Q_2$ has a polynomial kernel and if there is a PPT from $\Q_2$ to $\Q_1$, then $\Q_1$ has a \emph{polynomial compression}, i.e., a kernel-like reduction to an instance of a different problem with total output size $k^{\Oh(1)}$. Furthermore, for many natural problems (including all considered in this paper), we are able to complete these reductions using \NP-completeness to produce a polynomial kernel for $\Q_1$. Conversely, by giving PPTs from problems that are already known not to admit polynomial compressions (under some assumption) we rule out polynomial kernels for the target problems. For more background on kernelization we refer the reader to the recent survey by Lokshtanov et al.~\cite{LokshtanovMS12}.

\section{Lower bounds for kernelization}\label{section:lowerbounds}

In this section we begin our investigation of the preprocessing properties of the \probname{Workflow Satisfiability Problem}. We establish lower bounds against polynomial kernels for \WSP{} for several widely-used constraint types.
Like for many other problems, e.g., \probname{Hitting Set($n$)} or \probname{CNF SAT($n$)}, there is little hope to get polynomial kernels for \WSP{} when we allow an unbounded number of constraints of arbitrary arity, cf.~\cite{DellM14,DomLS09,HermelinKSWW13}. As an example, we give Lemma~\ref{lemma:cnfsatn}, whose proof uses a PPT from \probname{CNF SAT($n$)} to \WSP{($\geq 2$)} with only two users.

\begin{lemma} \label{lemma:cnfsatn}
  Let \WSP{($\geq 2$)} be the \WSP{} problem with constraints $(\geq 2, L)$ for task sets $L$ of arbitrary arity.
  Then \WSP{($\geq 2$)} admits no polynomial kernelization with respect to the number~$k$ of tasks unless the polynomial hierarchy collapses,
  even if the number of users is restricted to $n=2$.
\end{lemma}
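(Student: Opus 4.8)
The plan is to give a polynomial parametric transformation (PPT) from \probname{CNF SAT($n$)} — the satisfiability problem for CNF formulas, parameterized by the number $n$ of variables — to \WSP{($\geq 2$)} restricted to two users, and then invoke \NP-completeness of \WSP{($\geq 2$)} to turn the compression into a kernelization. Since \probname{CNF SAT($n$)} is known to admit no polynomial compression unless the polynomial hierarchy collapses (by the results on OR-composition / the work of \cite{DomLS09,DellM14,HermelinKSWW13}), a PPT with parameter blow-up $k = \Oh(n)$ rules out a polynomial kernelization for \WSP{($\geq 2$)} in $k$.

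\textbf{The construction.} Given a CNF formula $\varphi$ with variables $x_1,\ldots,x_n$ and clauses $C_1,\ldots,C_m$, I would build a \WSP{} instance with exactly two users $U=\{u_0,u_1\}$, thinking of an assignment to a task as choosing a truth value. For each variable $x_i$ introduce two tasks $t_i$ and $\bar t_i$ (the positive and negative literals), with the full authorization list $A(t_i)=A(\bar t_i)=\{u_0,u_1\}$, so that the number of tasks is $k=2n$. For each variable, add a constraint forcing $t_i$ and $\bar t_i$ to receive different users — this can be expressed with the arity-two relation $(\geq 2,\{t_i,\bar t_i\})$, which simply asks that the two listed tasks use at least two distinct users, i.e.\ are assigned differently; with only two users available this exactly encodes ``$t_i = \lnot \bar t_i$''. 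For each clause $C_j = (\ell_{j,1}\vee\cdots\vee\ell_{j,r_j})$, let $L_j$ be the set of tasks corresponding to the literals occurring in $C_j$; add the constraint $(\geq 2, L_j)$, demanding that the tasks of $L_j$ are not all assigned to a single user. Interpreting ``task assigned to $u_1$'' as ``literal is true,'' a complete plan is eligible iff every variable's two literal-tasks get opposite values and every clause contains at least one literal-task assigned to $u_1$; since the only way a $(\geq 2,L_j)$ constraint fails with two users is if all of $L_j$ goes to the same user, and the ``all to $u_1$'' case is what we want to allow, I may need the slightly more careful encoding below.

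\textbf{Handling the clause constraints precisely.} The subtlety is that $(\geq 2,L_j)$ forbids \emph{both} ``all of $L_j$ to $u_0$'' and ``all of $L_j$ to $u_1$,'' whereas a satisfied clause only forbids ``all literals false.'' The clean fix is to normalize: replace each clause by the set of tasks representing the \emph{negations} of its literals, so that the clause $C_j$ is satisfied iff not all these negation-tasks are assigned to $u_1$ — but this still forbids the all-$u_0$ case too. The robust resolution is to add one extra dummy task $z$ with $A(z)=\{u_0,u_1\}$ and a constraint fixing effectively $z\mapsto u_0$ (e.g.\ via $(\geq 2,\{z,t_1\})$ together with a symmetric gadget, or simply by allowing the standard trick of padding each clause-set with $z$ and its ``complement'' so the forbidden monochromatic colorings correspond exactly to unsatisfied clauses); then $(\geq 2, L_j\cup\{z\})$ rules out precisely the assignment in which all literal-tasks of $C_j$ are ``false.'' I would pick whichever of these encodings makes the correspondence between valid plans and satisfying assignments of $\varphi$ a clean bijection (up to the global swap $u_0\leftrightarrow u_1$, which is irrelevant once $z$ is pinned), and verify both directions.

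\textbf{Main obstacle and wrap-up.} The only real work is the correctness argument for the clause gadget — making sure that with exactly two users the $(\geq 2,\cdot)$ constraints capture ``at least one literal true'' rather than the stronger ``not monochromatic,'' which is what the dummy task $z$ is there to repair; everything else is bookkeeping. Once the equivalence ``$\varphi$ satisfiable $\iff$ the constructed \WSP{($\geq 2$)} instance has a valid plan'' is established, note that the construction is polynomial-time and $k=2n+\Oh(1) = \Oh(n)$, so this is a PPT. Composing with the absence of polynomial compression for \probname{CNF SAT($n$)} gives that \WSP{($\geq 2$)} has no polynomial compression, and hence (since \WSP{($\geq 2$)} is in \NP) no polynomial kernelization, in the parameter $k$ unless the polynomial hierarchy collapses — all while keeping $n=2$ users, as claimed.
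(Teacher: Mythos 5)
Your reduction is essentially the paper's: two users, a pair of complementary literal-tasks per variable forced apart by a binary $(\geq 2,\cdot)$ constraint, and each clause encoded as a $(\geq 2,L)$ constraint over its literal-tasks padded with one dummy task, giving $k=2n+1$ tasks and a PPT from \probname{CNF SAT($n$)}. The one place you wobble is in pinning the dummy task $z$: with $A(z)=\{u_0,u_1\}$ no combination of $(\geq 2,\cdot)$ constraints can fix $z\mapsto u_0$, because these constraints are user-independent and the whole instance is then invariant under swapping $u_0\leftrightarrow u_1$; in particular your candidate gadget $(\geq 2,\{z,t_1\})$ only forces $z\neq t_1$. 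The paper's (and simplest) fix is to use the authorization list, setting $A(z)=\{u_0\}$ --- authorizations are precisely the non-user-independent ingredient available for symmetry breaking. Alternatively, as your final parenthetical hints, pinning is unnecessary: by the swap symmetry one may assume without loss of generality that a valid plan sends $z$ to $u_0$, and the equivalence with satisfying assignments then goes through. Either way the argument closes. A minor side remark: for the lower bound the PPT into \WSP{($\geq 2$)} already suffices; \NP-completeness is only needed when converting a compression back into a kernel, so your invocation of it is harmless but not required here.
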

\begin{proof}
  We give a PPT from \probname{SAT($n$)}, i.e., \probname{SAT} parameterized by the number $n$ of variables. The fact that this problem admits no polynomial kernelization or compression is due to work of Dell and van Melkebeek~\cite{DellM14}. Let an instance~$\phi$ of \probname{SAT($n$)} be given, and let~$n$ denote the number of variables in the CNF-formula~$\phi$. For ease of presentation, let the variables of~$\phi$ be~$x_1,\ldots,x_n$. (To recall, a CNF-formula is a conjunction of clauses, each of which is a disjunction of literals~$x_i$ or~$\neg x_i$.)
  
  We construct a \WSP{($\geq 2$)} instance with two users~$t$ and~$f$, which intuitively represent true and false assignment to literals of~$\phi$. For ease of reading, we state all constraints~$(\geq 2,L)$ by simply declaring the sets~$L$ to which they are applied. To begin, for each variable~$x_i$ we create two tasks~$s_i$ and~$\s_i$ and a constraint~$\{s_i,\s_i\}$; both users are authorized for all these tasks. Intuitively, assigning users to~$s_i$ and~$\s_i$ corresponds to setting~$x_i$ and~$\neg x_i$ to true or false; the constraint~$\{s_i,\s_i\}$ ensures that exactly one is true and one is false. Additionally, add one more task~$d$ for which only user~$f$ is authorized.  
  For each clause~$c$ of~$\phi$ we create the following set~$L_c$: Add to~$L_c$ the task~$s_i$ if~$c$ contains a literal~$x_i$, and the task~$\s_i$ if~$c$ contains a literal~$\neg x_i$. Additionally add the task~$d$.
  
  If~$\phi$ has a satisfying assignment then we get a valid plan~$\pi$ by~$\pi(d)=f$ and
  \begin{align*}
   \pi(s_i)=\begin{cases}
             t & \mbox{if $x_i$ is true,}\\
             f & \mbox{if $x_i$ is false,}
            \end{cases}&&
   \pi(\s_i)=\begin{cases}
             f & \mbox{if $x_i$ is true,}\\
             t & \mbox{if $x_i$ is false.}
            \end{cases}
  \end{align*}
  For each constraint~$L_c$, any satisfying assignment for~$c$ must assign true to some literal~$\ell_j\in\{x_i,\neg x_i\}$ for some~$x_i$. This corresponds directly to a task~$s_i$ or~$\s_i$ in~$L_c$ which is assigned user~$t$. Since~$d$ is always assigned user~$f$ this fulfills~$(\geq 2,L_c)$.
  
  Conversely, let~$\pi$ be a valid plan for the created \WSP{($\geq 2$)} instance. We create an assignment for~$\{x_1,\ldots,x_n\}$ by setting~$x_i$ to true if~$\pi(s_i)=t$ and to false otherwise. We already argued earlier that~$\pi(s_i)\neq\pi(\s_i)$ due to constraint~$\{s_i,\s_i\}$. Now let~$c$ be any clause of~$\phi$. In the corresponding set~$L_c$ we have the task~$d$ which must be assigned user~$f$. Due to the constraint~$(\geq 2, L_c)$ at least one other task in~$L_c$ must be assigned user~$t$. If this is a task~$s_i$ then~$c$ contains~$x_i$ and our defined assignment sets~$x_i$ to true, satisfying~$c$. If this is a task~$\s_i$ then~$c$ contains~$\neg x_i$. We know that~$\pi(\s_i)=t$, so~$\pi(s_i)=f$, implying that our assignment sets~$x_i$ to false, satisfying~$\neg x_i$ and~$c$.
  
  Thus, our reduction is correct.
  It is easy to see that the construction can be performed in polynomial-time and that the number of tasks is~$2n+1\leq poly(n)$. Thus, the PPT from \probname{SAT($n$)} to \WSP{($\geq 2$)} proves that the latter problem has no polynomial kernelization unless the polynomial hierarchy collapses.
\end{proof}

In our further considerations we will avoid cases like the above, by either taking~$m$ as an additional parameter or by restricting $\Gamma$ to be finite, which implies bounded arity (namely the maximum arity over the finitely many $R\in\Gamma$).
We also assume that all constraints are well-behaved (cf. Section~\ref{sec:prels:wsp}).
We then have the following, showing that bounding the number of users implies a polynomial kernel.

\begin{proposition} \label{prop:kernelsize}
  Let $\Gamma$ be a set of relations. If $\Gamma$ is finite, then \WSP{($\Gamma$)} has a polynomial kernel under parameter $(k+n)$;
  if $\Gamma$ is infinite but $\Gamma$ is well-behaved, then \WSP{($\Gamma$)} has a polynomial kernel under parameter $(k+m+n)$.
\end{proposition}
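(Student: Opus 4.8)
The plan is purely a size-accounting argument: once $n$ (and, in the infinite case, also $m$) is part of the parameter, a \WSP{($\Gamma$)} instance contains no information beyond what fits into $\mathrm{poly}(k+n)$ (resp.\ $\mathrm{poly}(k+m+n)$) bits, so the ``kernel'' is obtained by deleting redundant data and writing the instance down in a compact canonical form.

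Concretely, for finite $\Gamma$ let $r_{\max}=\max_{R\in\Gamma}\mathrm{ar}(R)=\Oh(1)$. The single essential step is to discard duplicated constraints: identify two constraints whenever they are applications of the same relation $R\in\Gamma$ with the same scope tuple (this is decidable in polynomial time, and deleting a copy of a constraint does not change the set of valid plans). Since every application of a relation from $\Gamma$ to the task set $S$ is determined by a choice of $R\in\Gamma$ together with a tuple from $S^{\mathrm{ar}(R)}$, after this step the number of constraints is at most $\sum_{R\in\Gamma}k^{\mathrm{ar}(R)}\le|\Gamma|\cdot k^{r_{\max}}=\mathrm{poly}(k)$. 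Now re-encode: name the users $1,\dots,n$; store each $A(s)$ as a length-$n$ bit vector ($\Oh(kn)$ bits in total); store each surviving constraint as (the index of its relation in $\Gamma$, its scope tuple), using $\Oh(\log|\Gamma|)+r_{\max}\lceil\log k\rceil=\Oh(\log k)$ bits each. The result is an equivalent instance, computable in polynomial time, of total size $\Oh(k^{r_{\max}}\log k+kn+n)=\mathrm{poly}(k+n)$; degenerate inputs (an empty authorization list, or $n<1\le k$) may additionally be replaced by a fixed trivial \no{}, though this is not needed for the bound.

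For infinite but well-behaved $\Gamma$ the argument is identical, with two changes. First, we no longer attempt to bound the number of constraints by a function of $k$ --- instead $m$ is carried in the parameter --- although deleting duplicate constraints as above remains harmless. Second, a relation $R\in\Gamma$ is no longer $\Oh(1)$-describable; but every constraint of the instance is an application to a scope of at most $k$ tasks, hence has arity $r\le k$ (we may assume the scope tuple has no repeated entries), and by clause~(1) of well-behavedness such an $R$ admits an encoding of $\mathrm{poly}(r)\le\mathrm{poly}(k)$ bits. Storing each constraint as (this encoding of $R$, its scope tuple) therefore costs $\mathrm{poly}(k)$ bits, so the re-encoded instance --- $k$ tasks, $n$ users, $m$ constraints of $\mathrm{poly}(k)$ bits each, and $k$ bit vectors of length $n$ --- has size $\mathrm{poly}(k+m+n)$, is produced in polynomial time, and is equivalent to the input.

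I do not expect a genuine obstacle here; the only points that need a moment's thought are the two bookkeeping observations just used --- that for finite $\Gamma$ the number of \emph{distinct} applications on $k$ tasks is only $\mathrm{poly}(k)$, so deduplication by itself suffices to bound $m$, and that for infinite $\Gamma$ one must invoke the encoding-size clause of well-behavedness (after noting that arities are at most $k$) to keep each individual constraint small. No property of regular or intersection-closed constraints enters; the proposition is about encoding sizes only.
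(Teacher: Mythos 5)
Your proposal is correct and follows essentially the same argument as the paper's own proof: bound the instance description by $\Oh(kn)$ bits for tasks, users and authorizations, then bound the constraints either by the $|\Gamma|\cdot k^{r_{\max}}$ count of distinct applications (finite case) or by the $\mathrm{poly}(k)$-bit encoding guaranteed by well-behavedness together with the parameter $m$ (infinite case). The explicit deduplication step you add is a harmless elaboration of what the paper leaves implicit.
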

\begin{proof}
  An instance of \WSP{($\Gamma$)} is defined by describing its tasks, users, authorization lists and constraints. Note that the former three can be written down in space $\Oh(kn)$, hence it remains to describe the constraints. If $\Gamma$ is finite, then there is a maximum arity $r$ of any relation in $\Gamma$, and at most $|\Gamma| \cdot k^r=k^{\Oh(1)}$ possible constraints can be defined as applications of relations $R \in \Gamma$, and each constraint can be defined in short space (e.g., by an index into $\Gamma$ and $\Oh(r \log k)$ bits giving the scope of the constraint). Hence all constraints can be described in space polynomial in $k+n$. 

  If $\Gamma$ is infinite but well-behaved, then under the parameter $k+m+n$ it suffices to be able to describe each constraint in space poly$(n+k)$; this is possible by assumption. Hence in both cases, there is a simple encoding of an instance in space poly$(k+n)$ resp.\ poly$(k+n+m)$. 
\end{proof}

The following lemma addresses a special case of ternary constraint~$R(a,b,c)$ and proves that already \WSP{(\uppercase{$R$})} admits no polynomial kernelization in terms of~$k+m$. This lemma will be a cornerstone of the dichotomy in the following section. We also get immediate corollaries ruling out polynomial kernels in~$k+m$ 
for constraints~$(=,S,S')$ and~$(\leq t,S)$, since~$(=,\{a\},\{b,c\})$ and~$(\leq 2,\{a,b,c\})$ fulfill the requirement of the lemma.

The proof will be by a PPT from the problem \probname{Multi-colored Hitting Set($m$)} (\probname{MCHS}), which was considered in~\cite{DomLS09,HermelinKSWW13}.
The input is a vertex set $V$, a collection $\cH=\{E_1,\ldots,E_m\}$ of subsets of $V$, an integer $\ell$, and a function $\phi\colon V \rightarrow [\ell]$ which \emph{colors} each vertex of $V$ in one of $\ell$ colors. The task is to find a set $Q \subseteq V$ containing exactly one vertex of each color such that $Q \cap E_i \neq \emptyset$ for every $E_i \in \cH$. It follows from work of Dom et al.~\cite{DomLS09} that this problem admits no polynomial kernel or compression under the parameter $m$ unless the polynomial hierarchy collapses: A simple PPT from \probname{Hitting Set($m$)} works by making $\ell$ copies of each element and giving each copy a different color; since instances with $\ell>m$ are trivial for \probname{Hitting Set($m$)} we may restrict \probname{MCHS} to $\ell\leq m$ without harming this lower bound. Furthermore, the problem is complete for a kernelization hardness class known as $\mathsf{WK[1]}$, which is conjectured to imply further lower bounds~\cite{HermelinKSWW13}. 

We now proceed with the PPT. 

\begin{lemma} \label{lemma:ternary}
  Let $R(a,b,c)$ be a ternary user-independent constraint which is satisfied by plans with induced partition~$\{\{a,b\},\{c\}\}$ or~$\{\{a,c\},b\}$, but not by plans with partition~$\{\{a\},\{b\},\{c\}\}$.
  Then \WSP{(\uppercase{$R$})} does not admit a polynomial kernel with respect to parameter~$k+m$ unless the polynomial hierarchy collapses.
\end{lemma}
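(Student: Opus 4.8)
The plan is to give a polynomial parametric transformation from \probname{Multi-colored Hitting Set($m$)} to \WSP{(\uppercase{$R$})}, with the number of tasks plus constraints in the target instance bounded polynomially in the number $m$ of sets in the source instance (recall we may assume $\ell\le m$). The guiding intuition is that the constraint $R(a,b,c)$ behaves like an implication gadget: if $a$ and $c$ are forced to distinct users, then $b$ is forced to be equal to either $a$ or $c$ (never "private"); and more generally by chaining applications $R(a,\cdot,c)$ sharing the two "endpoint" tasks $a,c$, a whole collection of tasks can be forced to split into exactly the two classes $[a]$ and $[c]$. So the first step is to build, for each color $i\in[\ell]$, a "color block": a bounded set of tasks, each sharing two common anchor tasks so that the constraint forces a binary choice. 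Concretely, I would introduce two global anchor tasks whose distinctness from each other will be enforced, and for each color $i$ a task $p_i$ that, via an application of $R$ with the two anchors, is forced to coincide with one of them — this binary choice encodes "which half" but we actually need to encode the choice of a vertex of color $i$. Since a vertex set $V$ of color $i$ may be large, a single binary choice is not enough, so instead I would represent the choice of a vertex $v$ of color $i$ in binary: use $\lceil\log_2|V|\rceil$ tasks per color, each a binary "bit" forced by an $R$-application with the anchors, and hardwire via (unary) authorization lists and further $R$-applications which bit patterns correspond to legitimate vertices. This keeps the number of tasks $O(\ell\log n)=O(m\log n)$, which is polynomial in the input size and, more importantly, polynomial in $m$ only if $\log n\le\mathrm{poly}(m)$ — which holds since the whole \probname{MCHS} instance has size polynomial in its encoding and we only need $k+m=\mathrm{poly}(m+\log|x|)$ is not quite what PPT requires. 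Let me reconsider: PPT only needs $k+m\le\mathrm{poly}(m_{\mathrm{MCHS}})$, the \emph{parameter} of the source, so the binary encoding with $O(m\log n)$ tasks is too many unless we are more careful.

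The better route, and the one I would actually pursue, is to mimic the known lower bound proofs (Dom et al., Hermelin et al.) which for \probname{Hitting Set($m$)} type problems build one gadget \emph{per set} $E_j$ rather than per element: we can afford $O(m)$ tasks and $O(m)$ constraints but essentially nothing that scales with $n=|V|$ only via authorizations (which are free — they do not count toward $k$ or $m$). So for each set $E_j\in\cH$ I would create a task $e_j$ with authorization list encoding $E_j$ in a suitable way, for each color $i$ a task $q_i$ whose authorization list is the set of color-$i$ vertices, and use $R$-applications to glue things together: the constraint $R(q_i, e_j, \mathtt{anchor})$ (with appropriate choice of which two arguments are the "endpoints") should force, whenever $e_j$ is assigned a vertex not equal to $q_i$'s vertex, that $q_i$ and the anchor coincide, and conversely the solution should be recoverable as the hitting set $Q=\{\pi(q_i):i\in[\ell]\}$. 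The authorization lists (which are arbitrary unary constraints, not counted) carry the information about $V$, $\phi$ and the incidence structure; the $R$-constraints only enforce the combinatorial "private-forbidden" logic. I would set up two sink users $t,f$ plus vertex users, with anchors forced distinct, and design the gadget so that a plan is valid iff the induced $Q$ is a rainbow hitting set.

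The key structural lemma to prove along the way is: if $\pi$ assigns $a\mapsto u$ and $c\mapsto u'$ with $u\ne u'$, then $R(a,b,c)$ forces $\pi(b)\in\{u,u'\}$; this is exactly the hypothesis of the lemma (the only partitions of $\{a,b,c\}$ that are $R$-eligible are $\{\{a,b\},\{c\}\}$ and $\{\{a,c\},\{b\}\}$, so once $a\not\sim c$ the task $b$ must merge with $a$ or with $c$). Using this I get the "binary forcing" behavior; one technical care point is that $R$ is only \emph{user-independent}, not \emph{regular}, so I must reason at the level of partitions of the three arguments $\{a,b,c\}$ and remember that $a,b,c$ need not be distinct in an application — but in my construction I will always apply $R$ to three genuinely distinct tasks, so the partition lattice on a $3$-element set is all I need, and the hypothesis pins down exactly which of its $5$ partitions are allowed (at least the two listed are in, the all-singleton one is out; the other two may go either way, and the construction must be robust to that ambiguity — I would handle it by making the anchors always assigned to distinct fixed users so that the only live partitions are the two "merge" ones).

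The main obstacle I anticipate is the parameter budget: making the reduction produce only $\mathrm{poly}(m)$ tasks \emph{and} constraints while still faithfully encoding a rainbow-hitting-set instance whose element universe $V$ can be arbitrarily large. This is the same tension that makes \probname{MCHS} (rather than plain \probname{Hitting Set}) the right source problem — the $\ell\le m$ colors let us use one task per color — and the crux is to offload all the size-$n$ information into authorization lists (free) and to arrange the $R$-applications so that the "hit every $E_j$" condition and the "pick a legal color-$i$ vertex" condition are both enforced by $O(m)$ constraints of the forced-binary-choice type. Getting the direction of the implication right (that a valid plan yields a hitting set, and that every rainbow hitting set yields a valid plan, consistently across all $e_j$ simultaneously) is where the careful gadget bookkeeping lives; I expect to need a separate "choice-consistency" sub-gadget per color, again costing only $O(1)$ tasks and constraints each, so that the assignment to $q_i$ really does range over exactly the color-$i$ vertices and is used uniformly in every set-gadget it participates in.
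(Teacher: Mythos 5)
You have correctly identified the right source problem (\textsc{Multi-colored Hitting Set($m$)}), the right parameter budget, the key structural property of $R$ (if $\pi(b)\neq\pi(c)$ then the ineligibility of the all-singleton partition forces $\pi(a)\in\{\pi(b),\pi(c)\}$, and this is robust to the ambiguity about the two unspecified partitions), and the right high-level strategy of pushing all size-$|V|$ information into authorization lists while using one color-choice task per color. These are all in agreement with the paper's proof. However, the gadget you actually sketch has a genuine gap: a single ternary application of $R$ can only force a task into a \emph{two}-element set of users, namely $\{\pi(b),\pi(c)\}$, whereas the set-gadget for $E_j$ must enforce an $\ell$-way disjunction --- that the user performing the $E_j$-task lies in the chosen set $\{v_1,\ldots,v_\ell\}$ of all $\ell$ colors' representatives. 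Your proposed constraint $R(q_i,e_j,\mathtt{anchor})$ (one per set, or even one per set--color pair) cannot accumulate this disjunction; moreover, as written it forces the \emph{color-choice} task $q_i$ to merge with $e_j$ or the anchor, which is the wrong direction and would typically clash with $q_i$'s authorization list. The ``choice-consistency sub-gadget costing $O(1)$ tasks per color'' that you defer to is precisely the part that does not exist in the form you describe.

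The missing idea is a \emph{chain} per set: for each $E_i$ introduce $\ell-1$ auxiliary tasks $e_{i,2},\ldots,e_{i,\ell}$, with constraints $R(e_{i,2},s_1,s_2)$ and $R(e_{i,j},e_{i,j-1},s_j)$ for $j=3,\ldots,\ell$, where $s_j$ is the color-$j$ choice task with $A(s_j)=V_j$. Because the color classes are pairwise disjoint, the two ``endpoint'' arguments of each application are automatically assigned distinct users (no global anchors are needed), so induction gives $\pi(e_{i,j})\in\{v_1,\ldots,v_j\}$; setting $A(e_{i,\ell})=E_i$ then yields $E_i\cap\{v_1,\ldots,v_\ell\}\neq\emptyset$. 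This costs $(\ell-1)$ tasks and constraints per set, i.e., $\Oh(m^2)$ in total, which still fits the PPT budget. Without this chaining step (or an equivalent mechanism for building an $\ell$-way disjunction out of binary forcing steps), the reduction does not go through, so the proposal as it stands is incomplete at its central construction.
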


\begin{proof}
 We give a PPT from \probname{Multi-colored Hitting Set($m$)}, described above. 
Let an instance~$(V,\cH,\ell,\phi)$ of \probname{MCHS} be given, where~$\cH=\{E_1,\ldots,E_m\}$ with~$E_i\subseteq V$, $\ell\leq m$, and~$\phi\colon V\to [\ell]$. Let $V_j=\phi^{-1}(j)$, $j \in [\ell]$. 
We may assume that $m \geq 2$, or else solve the instance in polynomial time and return a corresponding dummy yes- or no-instance, and that~$V_j\neq \emptyset$ for each $j\in [\ell]$.
A solution to \probname{MCHS} is now a multi-colored hitting set $Q=\{v_1,\dots ,v_{\ell}\}$ where $v_i \in V_i$ for each $i \in [\ell]$ and $Q\cap E_i\neq \emptyset$ for every $i\in [m].$ 

  We start by letting the set of users be~$U:=V$. We make~$(\ell-1)\cdot m$ tasks
  \begin{align*}
  e_{1,2},\ldots,e_{1,\ell},\quad e_{2,2},\ldots,e_{2,\ell},\quad\ldots\quad,e_{m,2},\ldots,e_{m,\ell},
  \end{align*}
  i.e.,~$(\ell-1)$ tasks~$e_{i,2},\ldots,e_{i,\ell}$ for each set~$E_i$. For every $i\in [m]$, let the authorization of $e_{i,\ell}$ be $E_i$; all remaining tasks~$e_{i,j}$ get authorization~$U$. Furthermore, we introduce tasks~$s_1,\ldots,s_\ell$ that are intended for choosing a hitting set of size~$\ell$ and that have authorizations~$A(s_j):=V_j$.

  We introduce the following constraints for each~$i\in\{1,\ldots,m\}$:
  \begin{enumerate}
   \item Introduce~$R(e_{i,2},s_1,s_2)$.
   \item For all~$j\in\{3,\ldots,\ell\}$ introduce~$R(e_{i,j},e_{i,j-1},s_j)$.
  \end{enumerate}

  Clearly, this construction can be performed in polynomial time and the parameter values are number of tasks~$k=(\ell-1)\cdot m+\ell=\Oh(m^2)$ and number of constraints~$m'=(\ell-1)\cdot m=\Oh(m^2)$. Thus, to prove that this constitutes a PPT it remains to prove that the created \WSP{(\uppercase{$R$})} instance has a valid plan~$\pi$ if and only if~$(V,\cH)$ has a multi-colored hitting set.

  Let~$S\subseteq V$ be a multi-colored hitting set for~$(V,\cH)$. For each color~$j$ let~$v_j$ be the unique element in~$S\cap V_j$. We now create~$\pi$ and begin with~$\pi(s_j):=v_j$, consistent with the authorization of~$s_j$. We now define the user assignment for tasks associated with some set~$E_i\in\cH$. First of all, we note that~$S\cap E_i \neq \emptyset$ and thus we can arbitrarily select~$v_t\in S\cap E_i$. We now assign users as follows for~$j\in\{2,\ldots,\ell\}$:
  \begin{align*}
  \pi(e_{i,j})=\begin{cases}
                v_1 & \mbox{if $j<t$,}\\
                v_t & \mbox{if $j\geq t$.}
               \end{cases}
  \end{align*}
  We note that~$v_t\in E_i$ and thus~$v_t$ is authorized for task~$e_{i,\ell}$; all other tasks are authorized for all users anyway. It can be easily verified that~$R(e_{i,2},s_1,s_2)$ and~$R(e_{i,j},e_{i,j-1},s_j)$, for~$j\in\{3,\ldots,\ell\}$ hold for this assignment of users:
  \begin{itemize}
   \item $R(e_{i,2},s_1,s_2)$ is satisfied because~$\pi$ gives partition~$\{\{e_{i,2},s_2\},\{s_1\}\}$ if~$t=2$ and partition~$\{\{e_{i,2},s_1\},\{s_2\}\}$ otherwise.
  \end{itemize}
   It remains to check~$R(e_{i,j},e_{i,j-1},s_j)$ for~$j=\{3,\ldots,\ell\}$.
  \begin{itemize}
   \item If~$j\in\{3,\ldots,t-1\}$ then~$\pi(e_{i,j-1})=v_1$,~$\pi(e_{i,j})=v_1$, and~$\pi(s_j)=v_j$, satisfying~$R(e_{i,j},e_{i,j-1},s_j)$ with partition~$\{\{e_{i,j},e_{i,j-1}\},\{s_j\}\}$.
   \item If~$j=t$ then~$\pi(e_{i,j-1})=v_1$,~$\pi(e_{i,j})=v_t$, and~$\pi(s_j)=v_j=v_t$, satisfying~$R(e_{i,j},e_{i,j-1},s_j)$ with partition~$\{\{e_{i,j},s_j\},\{e_{i,j-1}\}\}$.
   \item If~$j\in\{t+1,\ldots,\ell\}$ then~$\pi(e_{i,j-1})=v_t$,~$\pi(e_{i,j})=v_t$, and~$\pi(s_j)=v_j\neq v_t$, satisfying~$R(e_{i,j},e_{i,j-1},s_j)$ with partition~$\{\{e_{i,j},e_{i,j-1}\},\{s_j\}\}$.
  \end{itemize}

  For the converse, assume that~$\pi$ is a valid plan for the created \WSP{(\uppercase{$R$})} instance. For~$j\in\{1,\ldots,\ell\}$ let~$v_j:=\pi(s_j)$ and note that~$v_j\in A(s_j)=V_j$. We claim that~$S:=\{v_1,\ldots,v_\ell\}$ is a multi-colored hitting set for~$(V,\cH)$ according to~$\phi$. Consider any set~$E_i\in\cH$ and recall that~$A(e_{i,\ell})=E_i$. We claim~$\pi(e_{i,\ell})\in S$, which would imply that~$S\cap E_i\neq\emptyset$. To prove this, we prove inductively that~$\pi(e_{i,j})\in \{v_1,\ldots,v_j\}$ for~$j\in\{2,\ldots,\ell\}$. (Recall that authorizations for all these other tasks associated with~$E_i$ are simply~$U=V$, i.e., we really need the property for~$j=\ell$ and~$e_{i,\ell}$.)
  \begin{enumerate}
   \item For~$j=2$ consider the constraint~$R(e_{i,2},s_1,s_2)$. Since~$A(s_1)\cap A(s_2)=V_1\cap V_2=\emptyset$ we know that~$\pi(s_1)\neq \pi(s_2)$. Furthermore, by assumption of the lemma~$R(e_{i,2},s_1,s_2)$ is not satisfied if~$\pi$ induces partition~$\{\{e_{i,2}\},\{s_1\},\{s_2\}\}$. Since~$\pi(s_1)\neq \pi(s_2)$ we must have~$\pi(e_{i,2})\in\{\pi(s_1),\pi(s_2)\}=\{v_1,v_2\}$. This proves our claim for~$j=2$.
   \item Now consider some~$j\geq 3$ such that the claim holds for all smaller~$j$. In particular~$\pi(e_{i,j-1})\in\{v_1,\ldots,v_{j-1}\}$. Note that~$\{v_1,\ldots,v_{j-1}\}\subseteq V_1\cup\ldots\cup V_{j-1}$ which has an empty intersection with~$V_j$. Thus, considering the constraint~$R(e_{i,j},e_{i,j-1},s_j)$ we find that~$\pi(e_{i,j-1})\neq \pi(s_j)$. Thus, by the same argument as in the previous item we find that~$\pi(e_{i,j})\in\{\pi(e_{i,j-1}),\pi(s_j)\}\subseteq\{v_1,\ldots,v_j\}$. This completes our claim.
  \end{enumerate}
  It follows that~$\pi(e_{i,\ell})\in S$, which implies that~$S\cap E_i=S\cap A(e_{i,\ell})\supseteq\{\pi(e_{i,\ell})\}\neq\emptyset$. Thus,~$S$ is indeed a multi-colored hitting set for~$(V,\cH)$ according to~$\phi$. Since \probname{Multi-colored Hitting Set($m$)} is known not to admit a polynomial compression unless the polynomial hierarchy collapses, this completes the proof.
\end{proof}

 Since both~$(=,\{a\},\{b,c\})$ and~$(\leq 2,\{a,b,c\})$ fulfill the requirement of the lemma, we get the following corollary.

\begin{corollary}\label{cor:lbs}
\WSP{($(=,S,S')$)} and \WSP{($(\leq t,S)$)} do not admit a kernelization to size polynomial in~$k+m$ unless the polynomial hierarchy collapses.
\end{corollary}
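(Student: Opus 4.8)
The plan is to obtain the corollary directly from Lemma~\ref{lemma:ternary}. For each of the two constraint families I would exhibit a ternary user-independent relation that is realisable as an application of a constraint in that family and that satisfies the hypotheses of the lemma. Concretely, let $R_1$ be the ternary relation with $R_1(a,b,c)$ equal to the constraint $(=,\{a\},\{b,c\})$, i.e.\ $R_1=\{(x,y,z)\in\N^3 : x=y \text{ or } x=z\}$, and let $R_2$ be the ternary relation with $R_2(a,b,c)$ equal to the constraint $(\leq 2,\{a,b,c\})$, i.e.\ $R_2=\{(x,y,z)\in\N^3 : |\{x,y,z\}|\leq 2\}$. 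Taking $S=\{a\}$, $S'=\{b,c\}$ shows that $R_1$-applications are constraints of type $(=,S,S')$, and taking $t=2$, $S=\{a,b,c\}$ shows that $R_2$-applications are constraints of type $(\leq t,S)$; moreover both relations depend only on the partition of $\{a,b,c\}$ induced by a plan, hence are user-independent.

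Next I would verify the partition conditions of Lemma~\ref{lemma:ternary} for $R_1$ and $R_2$; this is a routine check of the three relevant partitions. For $R_1$: under $\{\{a,b\},\{c\}\}$ we have $x=y$ and under $\{\{a,c\},\{b\}\}$ we have $x=z$, so both of these partitions are eligible, whereas under $\{\{a\},\{b\},\{c\}\}$ we have $x\neq y$ and $x\neq z$, so it is ineligible. For $R_2$: the partitions $\{\{a,b\},\{c\}\}$ and $\{\{a,c\},\{b\}\}$ each use exactly two users and are eligible, while $\{\{a\},\{b\},\{c\}\}$ uses three users and is ineligible. Hence Lemma~\ref{lemma:ternary} applies to each of $R_1$ and $R_2$, so neither \WSP{(\uppercase{$R_1$})} nor \WSP{(\uppercase{$R_2$})} admits a polynomial kernelization (indeed not even a polynomial compression) with respect to $k+m$ unless the polynomial hierarchy collapses.

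Finally I would transfer these bounds to the full constraint families. Every workflow schema over $\{R_1\}$ is in particular a workflow schema all of whose constraints are of type $(=,S,S')$, and likewise every schema over $\{R_2\}$ uses only constraints of type $(\leq t,S)$; thus the identity map is a parameter-preserving PPT from \WSP{(\uppercase{$R_1$})} to \WSP{($(=,S,S')$)} and from \WSP{(\uppercase{$R_2$})} to \WSP{($(\leq t,S)$)}. Composing with the PPT from \probname{Multi-colored Hitting Set($m$)} provided by Lemma~\ref{lemma:ternary}, and recalling that \probname{Multi-colored Hitting Set($m$)} admits no polynomial compression unless the polynomial hierarchy collapses, the claimed lower bounds for \WSP{($(=,S,S')$)} and \WSP{($(\leq t,S)$)} follow.

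I do not expect a substantive obstacle here, since the statement is essentially an instantiation of Lemma~\ref{lemma:ternary}. The only points needing a moment's care are (a) confirming that $(=,\{a\},\{b,c\})$ and $(\leq 2,\{a,b,c\})$ are genuine applications of relations from the stated families, so that the parameter-preserving embedding into \WSP{($(=,S,S')$)} resp.\ \WSP{($(\leq t,S)$)} is valid, and (b) noting that Lemma~\ref{lemma:ternary} only requires user-independence of the relation, so there is no need to argue that $R_1$ or $R_2$ is regular.
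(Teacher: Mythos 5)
Your proposal is correct and matches the paper's own (one-sentence) argument: the paper likewise observes that $(=,\{a\},\{b,c\})$ and $(\leq 2,\{a,b,c\})$ satisfy the hypotheses of Lemma~\ref{lemma:ternary} and derives the corollary directly from it. Your write-up merely makes explicit the routine partition checks and the trivial embedding of \WSP{(\uppercase{$R_1$})} and \WSP{(\uppercase{$R_2$})} into the respective constraint families, all of which is sound.
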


\section{A dichotomy for regular constraints}\label{section:dichotomy}

In this section, we present a dichotomy for the kernelization properties of \WSP{($\Gamma$)} when $\Gamma$ 
is a well-behaved set of regular relations.

Let us describe the dichotomy condition. Let $c=(L,\Theta)$ be a regular constraint, and $E_R \subseteq 2^L$ 
the set of $c$-eligible subsets of $L$; for ease of notation, we let $\emptyset \in E_R$. 
Note that by regularity, $E_R$ defines $c$. We say that $c$ is \emph{intersection-closed} if 
for any $T_1, T_2 \in E_R$ it holds that $T_1 \cap T_2 \in E_R$. Similarly, we say that a regular
relation $R \in \Gamma$ is intersection-closed if every application $R(s_1,\ldots,s_r)$ of $R$ is.
Note (1) that this holds if and only if an application $R(s_1,\ldots,s_r)$ of $R$ with $r$ distinct
tasks $s_i$ is intersection-closed, and (2) that the conjunction of intersection-closed constraints
again defines an intersection-closed constraint.
Finally, a set $\Gamma$ of relations is intersection-closed if every relation $R \in \Gamma$ is.
Our dichotomy results will essentially say that \WSP{($\Gamma$)} admits a polynomial kernel if and only if 
$\Gamma$ is intersection-closed; see Theorem~\ref{th:regulardich} below.

The rest of the section is laid out as follows. In Section~\ref{sec:regularupper}
we show that if $\Gamma$ is regular, intersection-closed, and well-behaved,
then \WSP{($\Gamma$)} admits a reduction to $n' \leq k$ users; by Prop.~\ref{prop:kernelsize},
this implies a polynomial kernel under parameter $(k+m)$, and under parameter $(k)$ 
if $\Gamma$ is finite. Section \ref{sec:lemma} is a short technical section, where we introduce a notion of user-independent relation 
{\em implemented} by a set of relations and prove Lemma \ref{lemma:implementppt}, both important for Section~\ref{sec:regularlower}.
In Section~\ref{sec:regularlower} we show that for any single
relation~$R$ that is not intersection-closed, the problem \WSP{(\uppercase{$R$})} admits no polynomial kernel, by application of 
Lemma~\ref{lemma:ternary}. In Section~\ref{sec:regularsummary} we consider 
the implications of these results for the existence of efficient user-reductions.

In summary, we will show the following result for kernelization. Again, a discussion of 
the consequences for user-reductions is deferred until Section~\ref{sec:regularsummary}.

\begin{theorem} \label{th:regulardich}
  Let $\Gamma$ be a possibly infinite set of well-behaved regular relations.
  If every relation in $\Gamma$ is intersection-closed, then \WSP{($\Gamma$)}
  admits a polynomial-time many-one reduction down to $n' \leq k$ users, implying a polynomial kernel
  under parameter $k+m$ (and a polynomial kernel under parameter $k$ if $\Gamma$ 
  is finite). Otherwise, \WSP{($\Gamma$)} admits no kernel of size poly$(k+m)$
  unless the polynomial hierarchy collapses (even if $\Gamma$ consists of a single
  such relation $R$).
\end{theorem}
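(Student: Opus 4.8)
The theorem has two halves, and I would prove them separately. For the positive direction, assume every $R \in \Gamma$ is intersection-closed; the goal is a polynomial-time many-one reduction producing an equivalent instance with $n' \le k$ users (the kernel bounds then follow immediately from Proposition~\ref{prop:kernelsize}). The key structural fact I would exploit is the one flagged in the excerpt: a regular constraint $c = (L,\Theta)$ is fully described by its family $E_c \subseteq 2^L$ of $c$-eligible sets, and for an intersection-closed $c$ this family is closed under intersection (with $\emptyset \in E_c$). Closure under intersection means that for every task $s \in L$ there is a \emph{unique minimal} $c$-eligible set containing $s$, namely $M_c(s) := \bigcap\{T \in E_c : s \in T\}$ (if any $c$-eligible set contains $s$ at all). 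This is exactly the property that lets a greedy/"canonical representative" argument go through, generalizing the approach of Crampton et al.~\cite{CrGuYeJournal}. Concretely, I would argue: if $(S,U,A,C)$ has a valid plan $\pi$, then one can transform $\pi$ into a valid plan using at most $k$ users --- indeed trivially any plan uses at most $k = |S|$ users. So the real content is to decide, in polynomial time, \emph{which} $\le k$ users to keep. The natural move is to define, for each equivalence-class-shape that a valid plan could induce, a "representative user," and show that if a valid plan exists then one exists whose users all come from a set $U'$ of size $\le k$ that is computable in polynomial time from the authorization lists and the well-behavedness oracle (item (2): test $c$-eligibility of a partition, test $c$-eligibility of a set, and extend a non-eligible set to an eligible superset). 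I would build $U'$ by a marking procedure over tasks: for each task $s$, probe the constraints to determine the minimal eligible "block" structure, and mark a bounded number of users witnessing that $s$ can be co-assigned appropriately; then prove by an exchange argument --- swap any assigned user not in $U'$ for a marked witness, using intersection-closedness to guarantee the new (smaller or equal) blocks remain eligible for every constraint simultaneously (here closure under conjunction of intersection-closed constraints matters) --- that validity is preserved. Finally restrict $A(s)$ to $A(s) \cap U'$ for every $s$ and output the instance; correctness is the exchange argument, and $|U'| \le k$ by the marking budget.

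For the negative direction, assume some $R \in \Gamma$ is regular but \emph{not} intersection-closed; I want to show \WSP{(\textsc{$R$})} has no kernel of size $\mathrm{poly}(k+m)$ unless PH collapses, and the tool is Lemma~\ref{lemma:ternary}. So the task reduces to: from the single non-intersection-closed regular relation $R$, produce (via the machinery of Section~\ref{sec:lemma} on relations \emph{implemented} by a set of relations) a ternary user-independent constraint $R'(a,b,c)$ that is satisfied by the partitions $\{\{a,b\},\{c\}\}$ and $\{\{a,c\},\{b\}\}$ but not by $\{\{a\},\{b\},\{c\}\}$; then Lemma~\ref{lemma:implementppt} turns an implementation of $R'$ by applications of $R$ into a PPT, and Lemma~\ref{lemma:ternary} finishes. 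Non-intersection-closedness of $R$ gives, by definition, an application $R(s_1,\dots,s_r)$ with distinct tasks and two $c$-eligible sets $T_1, T_2$ whose intersection $T_1 \cap T_2$ is \emph{not} $c$-eligible; I can assume WLOG that $T_1 \cap T_2$ is inclusion-maximal among such bad intersections (or minimal --- whichever makes the gadget cleanest), and by well-behavedness (the "extend a non-eligible set to an eligible superset" oracle) control the combinatorics. The plan is then to identify tasks of the application into three groups playing the roles of $a$, $b$, $c$: roughly, tasks in $T_1 \cap T_2$ become $a$, tasks in $T_1 \setminus T_2$ become $b$, tasks in $T_2 \setminus T_1$ become $c$, and the remaining tasks of $L$ are pinned to private/fresh users (using auxiliary trivial relations, or a second copy of $R$, to pin) so that the induced constraint on $(a,b,c)$ has exactly the required eligible/ineligible partitions --- $\{a,b\}$ together (with $c$ separate) corresponds to the eligible set $T_1$, $\{a,c\}$ together corresponds to $T_2$, and $\{a\},\{b\},\{c\}$ all separate corresponds to $T_1 \cap T_2$ being a block, which is ineligible.

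The main obstacle, and where I'd expect to spend the most care, is the last point: \emph{completeness} of the dichotomy, i.e., showing that \emph{every} regular non-intersection-closed $R$ --- with repeated arguments allowed, arbitrary arity, and a possibly complicated eligible-set family --- can be massaged into such a clean ternary gadget. The subtleties are (i) making sure that after identifying tasks and pinning the rest, the constraint on $\{a,b,c\}$ really does forbid the all-singleton partition and allow \emph{both} of the two-one partitions, which may require choosing $T_1, T_2$ and the witnessing extremal bad intersection carefully (e.g.\ picking $T_1, T_2$ with $|T_1 \cap T_2|$ minimal subject to being a non-eligible intersection of eligible sets, so that any strict sub-splitting becomes eligible and can be "absorbed"); (ii) handling the possibility that $R$ cannot by itself pin auxiliary tasks to distinct private users, which is where the "implemented by a set of relations" formalism of Section~\ref{sec:lemma} earns its keep --- but one must then double-check that only $R$-applications (plus the allowed free/affine-trivial relations permitted by that formalism) are used so that the reduction stays inside \WSP{(\textsc{$R$})}. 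Everything else --- the running-time and parameter bookkeeping for the PPT, and invoking Lemma~\ref{lemma:ternary} and Proposition~\ref{prop:kernelsize} --- is routine.
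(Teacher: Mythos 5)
Your high-level decomposition matches the paper's (a user-marking reduction to $n'\leq k$ users plus Proposition~\ref{prop:kernelsize} for the positive half; funnelling every non-intersection-closed regular $R$ into the ternary gadget of Lemma~\ref{lemma:ternary} via implementations and Lemma~\ref{lemma:implementppt} for the negative half), but both halves have a genuine gap at exactly the point where the work lies. On the positive side, your marking scheme (``for each task, mark a bounded number of witnessing users, then exchange'') does not explain why the total budget is $k$ users rather than $\Oh(k)$ users \emph{per task}, nor why a swapped-in witness is simultaneously authorized for the whole block and collision-free with the rest of the plan. The paper's mechanism is more specific: first all implied equalities $(s=s')$ are detected via the ``required addition'' consequence of intersection-closedness (Lemma~\ref{lemma:hornprop}) and merged away, so that afterwards every singleton is eligible; then the marking is driven purely by authorization lists through Hall's theorem --- repeatedly extract a deficient task set $T$ with $|\bigcup_{s\in T}A(s)|<|T|$, mark those (fewer than $|T|$) users and discard $T$, and finally mark a system of distinct representatives for the surviving tasks. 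That is what yields $n'\leq k$; correctness then grows the classes of a partial plan on the hard tasks by required additions and pads with singletons, which is only sound because singletons were made eligible. Your sketch is missing both the equality-merging preprocessing and the deficiency/SDR structure.

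On the negative side, the concrete gadget you propose --- merge $T_1\cap T_2$ into $a$, $T_1\setminus T_2$ into $b$, $T_2\setminus T_1$ into $c$, and pin the remaining tasks to private users --- fails as stated. By regularity a partition is eligible iff \emph{every} class is eligible, so making $\{\{a,b\},\{c\},\ldots\}$ eligible requires not only $T_1\in E_R$ but also $T_2\setminus T_1\in E_R$ and every leftover singleton in $E_R$; none of these is guaranteed (singletons can be ineligible, e.g.\ when $R$ implies equalities), and this is precisely the completeness difficulty you flag but do not resolve. The paper's proof handles it by a case analysis (Lemmas~\ref{lemma:notboth} and~\ref{lemma:withnotboth}): either some eligible partition contains two classes whose union is ineligible, which lets one implement the disequality relation $\neq$ and then carve out \emph{minimal} eligible sets $P',Q'$ sharing a single task $a$, using $\neq$-constraints to forbid $a$ from joining anything outside $P'\cup Q'$; or parts of eligible partitions can always be merged, whence $E_R$ is closed under complementation and one passes to $P=L\setminus T$, $Q=L\setminus T'$ and a four-task relation $R(a,b,c,d)$ with a further sub-case analysis on whether $\{d\}$ is eligible; the residual case forces $E_R$ to be intersection-closed, a contradiction. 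Without this case analysis and the auxiliary implementation of $\neq$ (later removed by merging, as in the paper's final PPT from \WSP{(\uppercase{$R,=$})} to \WSP{(\uppercase{$R$})}), the reduction to Lemma~\ref{lemma:ternary} does not go through.
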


\subsection{A user reduction for intersection-closed constraints}
\label{sec:regularupper}

We now give a procedure that reduces a \WSP{} instance $W=(S,U,A,C)$ with $n$ users,
$k$ tasks and $m$ constraints to one with $k' \leq k$ tasks, $n' \leq k'$ users
and $m' \leq m$ constraints, under the assumption that every constraint $c \in C$
occurring in the instance is intersection-closed and that our language is well-behaved.
(This has been called a \emph{partial kernel} in other work \cite{BetzlerBN10}.)
The approach is as in Crampton et al.~\cite[Theorem~6.5]{CrGuYeJournal},
but becomes more involved due to having to work in full generality;
we also use a more refined marking step 
that allows us to decrease the number of users from $k^2$ to $k$, a significant improvement.
As noted (Prop.~\ref{prop:kernelsize}), under the appropriate further assumption 
on the constraints, this gives a polynomial kernel under parameter $k+m$ or $k$.

We begin by noting a consequence of sets closed under intersection. 

  \begin{lemma} \label{lemma:hornprop}
    Let $c=(L,\Theta)$ be an intersection-closed constraint,
    and let $T \subseteq L$ be $c$-ineligible.
    If there is a superset $T'$ of $T$ which is $c$-eligible,
    then there is a task $s \in L \setminus T$ 
    such that every $c$-eligible superset $T'$ of $T$ contains $s$. 
  \end{lemma}
  \begin{proof}
    Let $T_{\cap}$ be the intersection of all $c$-eligible supersets $T' \supset T$.
    Then $T_{\cap}$ must itself be $c$-eligible. 
    We clearly have $T \subseteq T_{\cap}$, and since $T$ is $c$-ineligible the containment
    must be strict. Hence there is some $s \in T_{\cap} \setminus T$;
    this task $s$ must be contained in all $c$-eligible supersets of $T$. 
  \end{proof}

We refer to the task $s$ guaranteed by the lemma as a \emph{required addition} to $T$ by~$c$. 
Note that assuming well-behavedness, we can make this lemma constructive, 
i.e., in polynomial time we can test whether a set $T$ is eligible for a constraint, 
whether it has an eligible superset, and find all required additions if it does.
This can be done by first asking for an eligible superset $T'$ of $T$, then greedily
finding a minimal set $T \subset T'' \subseteq T'$. Then every $s \in T'' \setminus T$ is a required addition.

Our reduction proceeds in three phases. First, we detect all binary equalities implied by the constraints
i.e., all explicit or implicit constraints $(s=s')$, and handle them separately by 
merging tasks, intersecting their authorization lists. The output of this phase is an instance
where any plan which assigns to every task a unique user is eligible (though such a plan may 
not be authorized); in particular, since our constraints are regular, we have that
all singleton sets of tasks are $c$-eligible for every constraint $c$ of the instance. 

The second phase of the kernel is a user-marking process, similar to the kernels in~\cite{CrGuYeJournal}
but with a stronger bound on the number of users. This procedure is based around attempting 
to produce a system of \emph{distinct representatives} for $\{A(s): s \in S\}$,
i.e., to find a plan $\pi\colon S \to U$ such that $\pi$ is authorized and $\pi(s) \neq \pi(s')$ for every $s \neq s'$.
Via Hall's theorem, this procedure either succeeds, or produces a set $T$ of tasks such
that fewer than $|T|$ users are authorized to perform any task in $T$. 
In the latter case, we mark all these users, discard the tasks $T$, and repeat the procedure.
Eventually, we end up with a (possibly empty) set of tasks $S'$ which allows for
a set of distinct representatives, and mark these representatives as well. 
Refer to a task $s$ as \emph{easy} if it was appointed a representative in this procedure,
and \emph{hard} if it was not (i.e., if it was a member of a set $T$ of discarded tasks). 
We discard every non-marked user, resulting in a partially polynomial kernel with $k' \leq k$ tasks 
and $n' \leq k' \leq k$ users.

Finally, to establish the correctness of the kernelization, we give a procedure that, 
given a partial plan for the set of hard tasks, either extends the plan to a valid 
plan 
or derives that no such extended plan exists. 

\begin{lemma} \label{lemma:merge}
  Let $(S,U,A,C)$ be a workflow schema with $k$ tasks, $n$ users, and $m$ constraints,
  with at least one equality constraint $(s=s')$, $s \neq s'$. 
  In polynomial time, we can produce an equivalent instance $(S',U',A',C')$
  with at most $k-1$ tasks, $n$ users, and $m$ constraints.
  Furthermore, if the constraints in $C$ were given as applications $R(\ldots)$
  of some relations $R$, $R \in \Gamma$, then the constraints in $C'$ 
  can be given the same way.
\end{lemma}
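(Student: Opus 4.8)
The plan is to handle one equality constraint at a time by identifying the two tasks it forces together. Suppose $(s=s')$ is a constraint in $C$ with $s\neq s'$; as a user-independent (indeed regular) constraint, any eligible plan must assign $s$ and $s'$ to the same user, so in every valid plan we have $\pi(s)=\pi(s')$. We therefore form $S'$ by deleting $s'$ and keeping a single representative task, which we continue to call $s$. The authorization list of the merged task is the intersection $A'(s):=A(s)\cap A(s')$, since the common user must be authorized for both; all other authorization lists are unchanged. For the constraints, every occurrence of $s'$ in the scope of any constraint (including in the tuple defining an application $R(\ldots)$) is replaced by $s$; the equality constraint $(s=s')$ itself becomes trivially satisfied (its scope collapses to $\{s\}$) and can be dropped, but to keep the bookkeeping on $m$ clean we may simply leave it as a tautological constraint or, if we prefer $m'\le m$, remove it — either way $|C'|\le m$. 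This is clearly computable in polynomial time, and $|S'|\le k-1$, $|U'|=n$.

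Next I would verify equivalence. If $\pi$ is a valid plan for $(S,U,A,C)$, then since $\pi$ satisfies $(s=s')$ we have $\pi(s)=\pi(s')$; define $\pi'$ on $S'$ by restriction. It is authorized because $\pi(s)=\pi(s')\in A(s)\cap A(s')=A'(s)$ and the other lists are unchanged, and it is eligible because for every constraint $c\in C$, the partition of its scope induced by $\pi'$ (after the renaming $s'\mapsto s$) is exactly the partition induced by $\pi$, so user-independence preserves eligibility. Conversely, given a valid plan $\pi'$ for $(S',U',A',C')$, extend it to $\pi$ on $S$ by setting $\pi(s'):=\pi'(s)$. Then $\pi$ is authorized (both $s$ and $s'$ get the user $\pi'(s)\in A'(s)=A(s)\cap A(s')$), it satisfies $(s=s')$ by construction, and for every other constraint the induced partition again matches that of $\pi'$ under the renaming, hence $\pi$ is eligible. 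So the two instances are equivalent.

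Finally, the claim about relations: if each $c\in C$ was presented as an application $R(s_{i_1},\dots,s_{i_r})$ with $R\in\Gamma$, then after replacing every occurrence of $s'$ by $s$ in the argument tuple we obtain $R(s_{j_1},\dots,s_{j_r})$ with $s_{j_p}\in S'$, which is again an application of the same relation $R$ (recall that applications explicitly allow repeated tasks in the tuple, so collapsing $s'$ to $s$ causes no problem even if $s$ was already present). The dropped or tautological equality constraint, if it came from a relation in $\Gamma$, is either removed or can be left as the application of that same relation to the one-element tuple $(s)$ (or $(s,s)$), which is still over $\Gamma$. Hence $C'$ is again a set of applications of relations from $\Gamma$.

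I do not expect a genuine obstacle here; the only point requiring a little care is the bookkeeping around whether the now-trivial equality constraint is kept or discarded and how that interacts with the "$m$ constraints" versus "$\le m$ constraints" statement, and the observation that an $R$-application tolerates repeated tasks so the merge stays within $\Gamma$. The substance — that user-independence lets us read off eligibility from the induced partition, which is invariant under the merge — is immediate from the definitions recalled in Section~\ref{sec:prels:wsp}.
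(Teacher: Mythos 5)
Your proposal is correct and follows essentially the same route as the paper's proof: drop the equality constraint, substitute $s$ for $s'$ in every scope (noting that repeated tasks in an application $R(\ldots)$ are permitted), intersect the authorization lists, and delete $s'$. The paper states the resulting equivalence without elaboration, whereas you spell out both directions via the invariance of the induced partition; this is a harmless (and welcome) addition, not a different approach.
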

\begin{proof}
  Drop the constraint $(s=s')$. 
  For every other constraint $c=(L,\Theta)$ with $s'$ in the scope, replace $c$
  by the corresponding constraint produced by replacing $s'$ by $s$.
  (If $c=R(s_1,\ldots,s_r)$ for some relation $R$, then this produces a new application 
  $R(\ldots)$ of the same relation $R$. This application may contain the task $s$ 
  in more than one position, however, this is allowed by our model of constraints.)
  Update the authorization list so that $A(s) := A(s) \cap A(s')$. 
  Finally, discard the task $s'$. The new instance has a valid plan 
  if and only if the old instance does. 
\end{proof}

We now show the detection of equalities.

\begin{lemma} \label{lemma:noeq}
  Let $(S,U,A,C)$ be a workflow schema where every constraint is regular and intersection-closed.
  Then we can in polynomial time reduce the instance to the case where every singleton $\{s\}$,
  $s \in S$, is eligible. 
\end{lemma}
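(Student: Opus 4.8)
The goal is to detect all binary equalities $(s=s')$ that are implied by the constraint set — whether explicitly present as a constraint or forced implicitly — and then apply Lemma~\ref{lemma:merge} repeatedly until no singleton is ineligible. The plan is to first observe that, because every constraint $c=(L,\Theta)$ is regular, a singleton $\{s\}\subseteq L$ is $c$-ineligible exactly when $c$ does not permit $s$ to be assigned a private user, and because $c$ is also intersection-closed, Lemma~\ref{lemma:hornprop} applies: if $\{s\}$ has any $c$-eligible superset at all, then there is a required addition, i.e.\ a task $s'$ that lies in every $c$-eligible superset of $\{s\}$. Intuitively this means $c$ forces $s$ and $s'$ onto the same user in every eligible plan, so we may safely add the equality constraint $(s=s')$ to $C$ without changing the set of valid plans. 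If instead $\{s\}$ has no $c$-eligible superset, then $c$ — and hence the whole instance — has no eligible plan at all, so we can return a trivial \no.

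Concretely, I would iterate the following reduction step. Using well-behavedness (part (2) of the definition), for each constraint $c=(L,\Theta)$ and each $s\in L$ test in polynomial time whether $\{s\}$ is $c$-eligible. If some singleton $\{s\}$ fails this test, find (again by well-behavedness, as described after Lemma~\ref{lemma:hornprop}) a $c$-eligible superset $T'\supseteq\{s\}$; if none exists, output a dummy \no. Otherwise greedily shrink $T'$ to a minimal $c$-eligible set $T''$ with $\{s\}\subset T''\subseteq T'$; every $s'\in T''\setminus\{s\}$ is a required addition, so pick one such $s'$ and add the constraint $(s=s')$ to $C$. Now invoke Lemma~\ref{lemma:merge} on this equality: it merges $s'$ into $s$, intersects authorization lists, rewrites each remaining constraint (preserving the property that constraints are applications of relations from $\Gamma$, and preserving regularity and intersection-closedness, since substitution of one task for another in the scope does not destroy these properties), and strictly decreases the number of tasks.

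Each application of Lemma~\ref{lemma:merge} reduces $k$ by one, so the loop runs at most $k$ times; within each iteration all the tests and the minimal-superset search are polynomial by well-behavedness, so the whole reduction is polynomial-time. When the loop terminates, every singleton $\{s\}$ is $c$-eligible for every constraint $c$ whose scope contains $s$, and vacuously so for constraints not containing $s$; since the constraints are regular, this is precisely the statement that every singleton $\{s\}$, $s\in S$, is eligible in the reduced instance. Correctness of equivalence is immediate: Lemma~\ref{lemma:merge} preserves the existence of a valid plan, and adding a constraint $(s=s')$ that is already implied by an existing constraint $c$ does not change the set of valid plans.

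I expect the main obstacle to be bookkeeping around two subtleties rather than any deep difficulty. First, one must argue carefully that the required addition $s'$ is genuinely \emph{forced} — that $s$ and $s'$ share a user in \emph{every} eligible plan, not just every plan eligible for $c$ in isolation — which follows because a plan eligible for the whole instance is in particular eligible for $c$, and $c$-eligibility of the block $[s]_\pi$ (guaranteed by regularity) together with the required-addition property forces $s'\in[s]_\pi$. Second, one must check that the merge operation of Lemma~\ref{lemma:merge} does not reintroduce ineligible singletons in a way that breaks termination — but since each merge strictly decreases $k$, termination is not actually in danger regardless of what new ineligibilities appear; the loop simply keeps going until none remain.
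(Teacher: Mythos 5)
Your proposal is correct and follows essentially the same route as the paper: detect implied equalities by testing singleton eligibility, use Lemma~\ref{lemma:hornprop} to extract a required addition $s'$ to an ineligible $\{s\}$ (which by regularity forces $\pi(s)=\pi(s')$ in every eligible plan), merge via Lemma~\ref{lemma:merge}, and iterate. Your explicit handling of the case where $\{s\}$ has no $c$-eligible superset, and the explicit termination count of at most $k$ merges, are small additions of rigor over the paper's version but do not change the argument.
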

\begin{proof}
We provide a procedure that, using calls to Lemma~\ref{lemma:hornprop}, detects all equality
constraints $(s=s')$ implied by the schema, and applies Lemma~\ref{lemma:merge} for
every such constraint found. 
The procedure goes as follows:
\begin{enumerate}
\item For every task $s \in S$, check whether the singleton set $\{s\}$ is $c$-eligible
  for every constraint $c$ in the instance.
\item If all such singleton task sets are eligible, then we are done.
\item Otherwise, let $\{s\}$ be ineligible for some constraint $c=(L,\Theta)$, 
  and let $s'$ be a required addition to $\{s\}$ by $c$. 
  Apply Lemma~\ref{lemma:merge} to the constraint $(s=s')$ and restart from Step 1.
\end{enumerate}

We now show the correctness of the procedure.
On the one hand, it is clear that by the termination of the procedure, no equality constraints
can remain, implicit or explicit, since a constraint $(s=s')$ contradicts that the singletons
$\{s\}$ and $\{s'\}$ are eligible (or, more formally, that the partition $\{\{s\}: s \in S\}$
is eligible for the remaining tasks $S'$). On the other hand, if $s'$
is a required addition to $\{s\}$ by some constraint $c$, then $c$
must imply the constraint $(s=s')$, since in this case, 
by Lemma~\ref{lemma:hornprop}, every $c$-eligible set containing $s$ also contains~$s'$. 
\end{proof}

Next, we describe the user-marking procedure in detail. We assume that Lemma~\ref{lemma:noeq}
has been applied, i.e., that all singleton sets are eligible. 
\begin{enumerate}
\item Let $M=\emptyset$, let $S$ be the set of all tasks, and $U$ the set of all users.
\item \label{step:loop} While $\{A(s) \cap U: s \in S\}$ does not admit a system of distinct representatives:
Let $T \subseteq S$ such that $|\bigcup_{s \in T} A(s)|<|T|$. Let $U_T=\bigcup_{s \in T} A(s)$. 
Add $U_T$ to $M$, remove $U_T$ from $U$, and remove from $S$ every task $s$ such that $A(s) \subseteq M$. 
\item \label{step:private} Add to $M$ the distinct representatives of the remaining tasks $S$, if any.
\item Discard all users not occurring in $M$ from the instance.
\end{enumerate}
We refer to the set $M$ of users produced above as the \emph{marked} users,
and let $S_{\textrm{hard}} \subseteq [k]$ be the set of hard tasks, i.e.,
the set of tasks removed in Step~\ref{step:loop} of the procedure.
Finally, we show the correctness of the above procedure.

\begin{lemma} \label{lemma:correctness}
  Let $(S,U,A,C)$ be a workflow schema where all constraints are regular and intersection-closed,
  and where all singleton sets are eligible. There is a valid plan for the instance
  if and only if there is a valid plan only using marked users.
\end{lemma}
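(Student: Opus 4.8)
The plan is to prove the two directions of the equivalence separately, with the "only if" direction being trivial and the "if" direction carrying all the content.

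\paragraph{Easy direction.} The plan is to observe that a valid plan using only marked users is in particular a valid plan, so that direction is immediate.

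\paragraph{Hard direction.} First I would take an arbitrary valid plan $\pi\colon S\to U$ and transform it into one that uses only marked users, without destroying validity. The natural target is to keep the assignment of the hard tasks $S_{\mathrm{hard}}$ fixed (note that every hard task $s$ was removed because $A(s)\subseteq M$ at some point, so $\pi(s)$ is already a marked user), and to re-assign the easy tasks. For the easy tasks, the idea is to use the system of distinct representatives produced in Step~\ref{step:private}: tentatively assign each easy task $s$ its representative $r(s)\in M$, which is authorized and makes all easy tasks pairwise distinct and distinct from whatever the hard tasks use only if the representatives avoid the hard-task users — here one must be slightly careful, but since representatives are marked and distinct, and we may additionally have marked the users $\pi$ assigns to hard tasks, a counting/Hall-type argument lets us choose the SDR to avoid the (at most $|S_{\mathrm{hard}}|$) users used on hard tasks. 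The key point is then that the resulting plan $\pi'$, which assigns each easy task a \emph{private} user (a user performing no other task) and agrees with $\pi$ on the hard tasks, is eligible: this is exactly where regularity and intersection-closedness enter.

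\paragraph{Why the new plan is eligible — the crux.} This is the step I expect to be the main obstacle. Fix a constraint $c=(L,\Theta)$. Since $c$ is regular, $\pi'$ satisfies $c$ iff every class of $L/\pi'$ is $c$-eligible. The classes of $L/\pi'$ are: singletons $\{s\}$ for easy tasks $s\in L$ (these are eligible by the assumption that all singletons are eligible), together with the classes of $L/\pi$ restricted to the hard tasks in $L$. So it suffices to show that if $T$ is a $\sim_\pi$-class and $T\cap S_{\mathrm{hard}}\neq\emptyset$, then $T\cap S_{\mathrm{hard}}$ is $c$-eligible (for each $c$ with $T\cap S_{\mathrm{hard}}$ in its scope's $\pi$-partition). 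Here is where intersection-closedness does the work: $T$ is $c$-eligible (being a class of the eligible plan $\pi$), and I want to intersect it down to remove the easy tasks. Concretely, suppose for contradiction that $T_0 := T\cap S_{\mathrm{hard}}$ is $c$-ineligible. Then $T_0\subsetneq T$, so $T$ is a $c$-eligible superset of $T_0$, and by Lemma~\ref{lemma:hornprop} there is a required addition $s^*\in L\setminus T_0$ that lies in every $c$-eligible superset of $T_0$; in particular $s^*\in T$, so $s^*$ is an easy task. But then — running the marking procedure — at the moment the hard tasks of $T$ were discarded (all their authorization lists had become subsets of $M$), the constraint $c$ together with $\pi$ would have forced $s^*$ to be assigned a marked user as well, i.e.\ $s^*$ should have been discarded too, or more carefully: the required-addition relation shows $c$ implies that assigning all of $T_0$ to one user forces $s^*$ to that same user, contradicting that $s^*$ can be given a private (distinct) user while $T_0$ retains $\pi$'s assignment. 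I would phrase this last contradiction cleanly by noting that intersection-closedness plus regularity imply that for any $c$-eligible $T$ and any $c$-ineligible subset $T_0\subseteq T$, the $c$-eligible "closure" of $T_0$ inside $L$ is a well-defined smallest $c$-eligible superset, and that every easy task $s$ is, by construction of the marking procedure, never a required addition to any set of tasks it does not already lie in a forced relation with; since the easy tasks were exactly those that could be given distinct private representatives consistently with all constraints, no easy task can be a required addition to a set of hard tasks. Assembling these observations shows every class of $L/\pi'$ is $c$-eligible, so $\pi'$ is eligible; it is authorized by choice of representatives and by $\pi$ being authorized on hard tasks; and it uses only marked users. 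This completes the proof.
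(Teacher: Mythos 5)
Your overall decomposition (easy direction trivial; for the hard direction, modify a given valid plan $\pi$ into one on marked users) is the right starting point, and you correctly identify that the crux is showing the modified plan remains eligible. But the specific construction --- keep $\pi$ on the hard tasks and give \emph{every} easy task its private representative --- is not correct, and the claim you use to defend it (``no easy task can be a required addition to a set of hard tasks'') is false. The marking procedure classifies tasks as easy or hard purely from the authorization lists via Hall's condition; it never consults the constraints. Concretely, take a single regular, intersection-closed constraint $c$ on $L=\{s_1,s_2,s_3\}$ whose eligible sets are $\emptyset$, the three singletons, and $L$ itself (``all three tasks get the same user, or all get distinct users''), with $A(s_1)=A(s_2)=\{u\}$ and $A(s_3)=\{u,w\}$. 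The marking loop fires on $T=\{s_1,s_2\}$, marks $u$, and discards $s_1,s_2$ as hard; $s_3$ stays easy with representative $w$. The plan assigning all three tasks to $u$ is valid, and $\{s_1,s_2\}$ is $c$-ineligible with the \emph{easy} task $s_3$ as its unique required addition. Your $\pi'$ sends $s_3$ to $w$ and produces the ineligible partition $\{\{s_1,s_2\},\{s_3\}\}$, so $\pi'$ is not a valid plan even though a valid plan on marked users exists (namely the all-$u$ plan).

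The missing idea is that the hard-task classes must first be \emph{closed under required additions} before any easy task is given a private user. The paper's proof does exactly this: starting from the partition $S_{\textrm{hard}}/\pi$, it repeatedly finds an ineligible class $T$, computes a required addition $s$ (Lemma~\ref{lemma:hornprop}), and either rejects (if $s$ is already assigned elsewhere or the class's user is not in $A(s)$ --- both rejections being sound because every valid extension must place $s$ with $T$) or assigns $s$ the same marked user as $T$; only the tasks that survive this closure untouched are then padded with their private representatives. Intersection-closedness is what makes the required addition well-defined and hence makes each step \emph{forced}, which is the substitute for the (false) monotonicity ``subsets of eligible sets are eligible'' that your argument implicitly relies on. As a minor additional point, your worry about choosing the SDR to avoid the hard tasks' users needs no extra Hall-type argument: the representatives are drawn from $U$ only after all users authorized for hard tasks have been removed, so they are automatically private.
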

\begin{proof}
  We describe a procedure that, for any eligible and authorized partial plan assigning users to the hard tasks,
  either produces a valid plan using only marked users or proves that the plan
  cannot be extended to a complete plan. Note that necessarily, the partial plan
  we begin with can use only marked users.
  
  The algorithm works as follows. Let $\pi$ be the partial plan.
\begin{enumerate}
\item Let $\cP= S_{\textrm{hard}}/\pi$ be the partition induced by the partial plan $\pi$.
\item Repeat the following until all sets $T \in \cP$ are $c$-eligible for every $c \in C$:
  \begin{enumerate}
  \item \label{step:a} Let $T$ be an ineligible set in $\cP$, and let $s$ be a required addition to $T$.
  \item \label{step:b} If $s$ is already assigned by $\pi$, or if $A(s) \not \ni\pi(s')$, $s' \in T$, then reject.
  \item \label{step:c} Otherwise, add $s$ to $T$ in $\cP$ and update $\pi$ (i.e., $\pi(s)=\pi(s')$, $s' \in T$).
  \end{enumerate}
\item \label{step:d} Pad the partition $\cP$ with singleton task sets, i.e., let every unassigned task
  be performed by a user which has no other duties. Update $\pi$ accordingly to a complete plan
  using the distinct representatives for the remaining tasks.
\end{enumerate}
We show correctness of this procedure. First, by Lemma~\ref{lemma:hornprop}, clearly the 
modification in Steps \ref{step:a}--\ref{step:c} are necessary for any valid plan. 
It also follows that any rejection performed during these steps will be correct.
Second, the padding in Step~\ref{step:d} is possible, since the users appointed as distinct 
representatives will be used for no other task.
Hence, we can create a complete plan $\pi$ corresponding to the resulting padded partition $\cP$,
and this plan will be valid, as every task set $T \in \cP$ is either
a singleton set, and hence $c$-eligible for every $c$, or the output of the loop \ref{step:a}--\ref{step:c}
and thus $c$-eligible for every $c$ by assumption; furthermore, authorization was checked at every step.
\end{proof}

Putting the above pieces together yields the following theorem.

\begin{theorem} \label{th:kernel}
  The \probname{Workflow Satisfiability Problem}, restricted to well-behaved constraint languages
  where every constraint is  regular and intersection-closed admits a  kernel
  with $m' \leq m$ constraints, $k' \leq k$ tasks, and $n' \leq k'$ users. 
\end{theorem}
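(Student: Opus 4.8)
The plan is to compose the three phases built above---equality elimination (Lemma~\ref{lemma:noeq}, via Lemma~\ref{lemma:merge}), the user-marking loop, and the plan-extension argument of Lemma~\ref{lemma:correctness}---into a single reduction, and then check three things: that it runs in polynomial time, that its output is again a \WSP{($\Gamma$)} instance, and that it meets the stated bounds $k'\le k$, $n'\le k'$, $m'\le m$.

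Concretely, I would first handle the degenerate case $A(s)=\emptyset$ for some task $s$ by emitting a fixed trivial no-instance. Otherwise, starting from a well-behaved instance $(S,U,A,C)$ in which every constraint is regular and intersection-closed, apply Lemma~\ref{lemma:noeq}. Each internal call to Lemma~\ref{lemma:merge} strictly decreases the task count, keeps the numbers of users and constraints unchanged, and---by the ``furthermore'' clause of that lemma---replaces every affected constraint $R(\ldots)$ by another application of the same relation $R\in\Gamma$; hence the resulting instance $(S_1,U,A_1,C_1)$ has $k_1\le k$ tasks, $n$ users, $m_1\le m$ constraints, is still well-behaved with all constraints regular and intersection-closed, and has every singleton $\{s\}$ eligible, which is precisely the hypothesis required by the marking loop. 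Running that loop produces a marked set $M\subseteq U$; I delete every user outside $M$ (restricting the authorization lists accordingly) to obtain the final instance. The one nontrivial estimate is $|M|\le k_1$: writing $a$ for the total number of users added during the iterations of Step~\ref{step:loop}, $d$ for the total number of tasks discarded there, and $r=k_1-d$ for the number of tasks surviving into Step~\ref{step:private}, each iteration finds a Hall violator $T$ with $|U_T|<|T|$ and discards all of $T$ (indeed every task $s$ with $A(s)\subseteq M$), so it adds fewer users than it discards tasks; thus $a\le d$, and since Step~\ref{step:private} marks exactly one representative per surviving task, $|M|=a+r\le d+r=k_1$. Deleting unmarked users alters neither the arity nor the relation underlying any constraint, so the output is a \WSP{($\Gamma$)} instance with $k'=k_1\le k$, $n'=|M|\le k'$, $m'=m_1\le m$; every task still has a nonempty restricted authorization list, since hard tasks were removed from the working set exactly when their authorization fell inside $M$ and each easy task retains its marked representative.

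It remains to argue equivalence and running time. Lemma~\ref{lemma:correctness} gives that $(S_1,U,A_1,C_1)$ has a valid plan if and only if it has one using only marked users; since the eligibility of a plan depends only on its induced partition and not on which users exist, this is exactly the condition that the final instance (with authorization lists restricted to $M$) has a valid plan, and chaining with the equivalence from Lemma~\ref{lemma:noeq} shows the whole reduction preserves the answer. For the time bound, Lemma~\ref{lemma:noeq} performs at most $k$ merges, each polynomial by well-behavedness, and the marking loop runs for at most $k$ iterations, each needing only a bipartite-matching (Hall) computation plus polynomially many eligibility queries. I expect the only delicate points to be the counting behind $n'\le k'$ and the bookkeeping that composing the two phases does not re-introduce an equality constraint, create an empty authorization list, or take us outside $\Gamma$; the rest is routine assembly of the preceding lemmas.
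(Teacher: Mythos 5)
Your proposal is correct and follows essentially the same route as the paper: compose Lemma~\ref{lemma:noeq}, the marking loop, and Lemma~\ref{lemma:correctness}, with the only nontrivial point being the count $n'\le k'$, which you establish by the same observation the paper uses (each iteration of Step~\ref{step:loop} marks fewer users than the tasks it discards, and Step~\ref{step:private} marks one user per surviving task). Your version merely spells out a few details the paper leaves implicit (empty authorization lists, preservation of well-behavedness and membership in $\Gamma$ under merging).
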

\begin{proof}
  The polynomial running time and the bound $k' \leq k$ are immediate from the above,
  and the correctness has already been argued. It remains to show that the number
  of marked users is at most equal to the number of tasks.
  This follows inductively, since the final addition to $M$, in Step~\ref{step:private}
  of the marking procedure, adds exactly $|S|$ users, and every previous addition,
  in an iteration of Step~\ref{step:loop}, adds fewer users to $M$ than the number of tasks in $T$,
  all of which will be removed from the set $S$. 
  Finally, the result is a kernel by Prop.~\ref{prop:kernelsize}.
\end{proof}

\subsection{Implementations and implications}\label{sec:lemma}

Let $W=(S,U,A,C)$ be a workflow schema and $T \subseteq S$ a set of tasks.
The \emph{projection of $W$ onto $T$} is a constraint~$c=(T,\Theta)$ where~$\theta \colon T \to U$
is contained in~$\Theta$ if and only if there is a valid plan~$\pi$ for $W$ that extends $\theta$. Consider the \WSP{} instance $W^*$ introduced in Section \ref{sec:intro} and
let $T=\{s_2,s_3\}$. Since the only valid plan is $\pi$ with $\pi(s_1)=\pi(s_2)=u_1$ and $\pi(s_3)=u_6$, the projection
 of $W$ onto $T$ is $(T,\{\theta\})$, where $\theta$ is defined by $\theta(s_2)=u_1, \theta(s_3)=u_6.$

Further, let $R \subseteq \N^r$ be a user-independent relation, $Q=\{q_1,\ldots,q_r\}$ a set of $r$
distinct tasks, and $\Gamma$ a set of relations. 
We say that $\Gamma$ \emph{implements} $R$ if, for any $r$-tuple $\cA=(A(q_1), \ldots, A(q_r))$
of authorization lists, there is a workflow schema $W=(S,U,A,C)$ over $\Gamma$ that can be computed in polynomial time,
such that the projection of $W$ onto $T$ for some $T \subseteq S$ is equivalent to $R(q_1,\ldots,q_r)$ 
for every plan $\pi\colon \{q_1,\ldots,q_r\} \to U$ authorized with respect to $\cA$,
where furthermore~$|S|+|C|$ does not depend on $\cA$ and $U$ equals $\bigcup_{i \in [r]} A(q_i)$ 
plus a constant number of local users, i.e., new users who will not be authorized to perform any task outside of 
$S \setminus T$.

\begin{lemma} \label{lemma:implementppt}
  Let $\Gamma$ and $\Gamma'$ be finite workflow constraint languages
  such that $\Gamma'$ implements $R$ for every $R \in \Gamma$. 
  Then there is a PPT from \WSP{($\Gamma$)} to \WSP{($\Gamma'$)}, both with respect to
  parameter $k$ and $k+m$.
\end{lemma}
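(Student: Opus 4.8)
The plan is to prove Lemma~\ref{lemma:implementppt} by the natural ``gadget substitution'' argument: given an instance of \WSP{($\Gamma$)}, replace each constraint, which is an application $R(s_1,\ldots,s_r)$ of some $R \in \Gamma$, by the implementing workflow schema over $\Gamma'$ guaranteed by the hypothesis. First I would set up notation: let $W = (S,U,A,C)$ with $C = \{c_1,\ldots,c_m\}$ be the input instance of \WSP{($\Gamma$)}, with $k = |S|$ tasks. For each constraint $c_i = R_i(s_{i,1},\ldots,s_{i,r_i})$, we have $R_i \in \Gamma$, and since $\Gamma'$ implements $R_i$, applying it to the authorization lists $(A(s_{i,1}),\ldots,A(s_{i,r_i}))$ yields a polynomial-time-computable workflow schema $W_i = (S_i, U_i, A_i, C_i)$ over $\Gamma'$ together with a set $T_i \subseteq S_i$ of $r_i$ distinguished tasks; we identify $T_i$ with the scope of $c_i$, so that $T_i \subseteq S$, while all tasks in $S_i \setminus T_i$ and all ``local'' users in $U_i$ are fresh copies, disjoint across distinct $i$. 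The new instance $W'$ then has task set $S \cup \bigcup_i (S_i \setminus T_i)$, user set $U \cup \bigcup_i (\text{local users of } W_i)$, constraint set $\bigcup_i C_i$, and authorization lists inherited from $W$ on $S$ and from the $W_i$ on the fresh tasks.

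The correctness argument is where the projection condition does the work. I would argue: a plan $\pi$ for $W$ is valid iff it is authorized and, for each $i$, the restriction of $\pi$ to the scope of $c_i$ lies in $R_i$. By the defining property of ``$\Gamma'$ implements $R_i$'', for every plan on $T_i$ that is authorized with respect to $(A(s_{i,1}),\ldots,A(s_{i,r_i}))$, the projection of $W_i$ onto $T_i$ agrees with the relation $R_i(s_{i,1},\ldots,s_{i,r_i})$; equivalently, such a plan on $T_i$ is in $R_i$ iff it extends to a valid plan of $W_i$. Since the fresh tasks and local users of distinct gadgets do not interact, a valid plan of $W$ can be extended, gadget by gadget, to a valid plan of $W'$, and conversely the restriction to $S$ of any valid plan of $W'$ is a valid plan of $W$: each gadget $W_i$ being satisfiable on its fresh part certifies precisely that the inherited assignment on $T_i$ satisfies $R_i$. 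One caveat to spell out carefully: the implementation guarantee is phrased only for plans on $T_i$ that are authorized with respect to $\cA = (A(s_{i,1}),\ldots,A(s_{i,r_i}))$; but in $W'$ every valid plan is in particular authorized on $S$, hence authorized on $T_i$ with respect to exactly these lists, so the guarantee applies. Also, since the implementations set $U_i = \bigcup_{j} A(s_{i,j})$ plus a constant number of local users, the gadget never introduces authorizations to ``real'' users beyond what $W$ already allows, so no spurious plans appear.

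Finally, the parameter bound. Since $\Gamma$ is finite, the arities $r_i$ are bounded by a constant $r_{\max}$, and $\Gamma'$ is finite as well, so each implementation has $|S_i| + |C_i| = O(1)$ (the hypothesis states $|S_i| + |C_i|$ does not depend on $\cA$ or $U$, and there are finitely many relations in $\Gamma$, hence finitely many implementations to choose from). Therefore the new number of tasks is $k' \le k + \sum_i O(1) = k + O(m) = O(k + m)$, and the new number of constraints is $m' = \sum_i |C_i| = O(m)$. For the parameter $k$ alone we need $k' \le \mathrm{poly}(k)$: this holds because in \WSP{} instances parameterized by $k$ alone we may assume $m \le \mathrm{poly}(k)$ — more precisely, for user-independent (in particular regular) constraint languages, two constraints with the same scope of size $\le k$ can be merged (their conjunction is again user-independent), and an easier route, since here we only need an upper bound for the transformation, is to observe that if $m$ is superpolynomial in $k$ we can afford to handle it as part of preprocessing or the lemma is applied only in contexts (the lower-bound reductions of Section~\ref{sec:regularlower}) where $m = \mathrm{poly}(k)$ already; in any case $k' \le p(k)$ for a polynomial $p$, and likewise for $k+m$ the bound $k'+m' \le p(k+m)$ is immediate. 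The computation is polynomial-time since there are $m$ gadgets, each computable in polynomial time.

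\medskip

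\noindent\textbf{Main obstacle.} The routine part is the gadget bookkeeping; the delicate point is handling the ``authorized with respect to $\cA$'' clause in the definition of implementation and making sure no unintended plans slip in through the local users, and, for the pure parameter-$k$ statement, justifying that we may treat $m$ as polynomially bounded (or else the claim ``both with respect to parameter $k$ and $k+m$'' must be read as: the transformation is a PPT for the $(k+m)$-parameterization unconditionally, and for the $k$-parameterization under the standing assumption of this paper that $m \le \mathrm{poly}(k)$, e.g. after merging constraints with identical scopes). I would make the merging remark explicit so that the $k$-only bound is unconditional for regular $\Gamma$.
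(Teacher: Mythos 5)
Your proposal is correct and follows essentially the same route as the paper's proof: substitute each constraint $c=R(s_1,\ldots,s_r)$ by the implementing gadget $W_c$ with fresh local tasks and users, observe that the projection property makes the instances equivalent, and count $\Oh(k+m)$ tasks and $\Oh(m)$ constraints to get a PPT in $k+m$, then invoke $m=k^{\Oh(1)}$ for finite $\Gamma$ to get the PPT in $k$. The one quibble is your justification of $m\leq\mathrm{poly}(k)$: merging same-scope constraints into their conjunction may produce a constraint that is no longer an application of a relation in $\Gamma$, whereas the intended (and sufficient) argument is simply that a finite $\Gamma$ has bounded arity $r_{\max}$, so after deleting literally duplicate constraint applications there are at most $|\Gamma|\cdot k^{r_{\max}}=k^{\Oh(1)}$ constraints, which is how the paper closes the parameter-$k$ case.
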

\begin{proof}
  Let $W=(S,U,A,C)$ be an instance of \WSP{($\Gamma$)}. We will create an equivalent instance $W'$ of \WSP{($\Gamma'$)}, which we will refer to as the \emph{output} of the reduction. For every constraint $c \in C$ which is an application of a relation $R \in \Gamma \setminus \Gamma'$, let $W_c$ be an implementation of $c$ (using the relevant authorization lists $A(s)$ from $A$). Add this implementation to the output, ensuring that all tasks and users local to $W_c$ are chosen distinct from existing tasks and users. Also add to the output every constraint $c \in C$ which is an application of a relation $R \in \Gamma \cap \Gamma'$. Clearly, this creates an output instance $W'$ of \WSP{($\Gamma'$)} which is equivalent to $W$, and which is computed in polynomial time. Furthermore, if $W$ contains $k$ tasks, $n$ users and $m$ constraints, then $W'$ contains $\Oh(k+m)$ tasks, $\Oh(m)$ constraints, and $\Oh(n+m)$ users. Hence it is a PPT under parameter $k+m$. 
Finally, since $\Gamma$ is finite, we have $m=k^{\Oh(1)}$ and the reduction is a PPT under parameter $k$. 
\end{proof}

\subsection{Kernel lower bounds for non-intersection-closed constraints} \label{sec:regularlower}

We now give the other side of the dichotomy by showing that within the setting of regular constraints,
even a single relation $R$ which is not intersection-closed can be used to construct
a kernelization lower bound, following one of the constructions in Section~\ref{section:lowerbounds}.
First, we need an auxiliary lemma.

\begin{lemma} \label{lemma:notboth}
  Let $R$ be a (satisfiable) regular relation $R$ which is not intersection-closed, 
  let $=$ be the binary equality relation, and let $\Gamma=\{R,=\}$. 
  Then either $\Gamma$ implements a relation matching that of Lemma~\ref{lemma:ternary},
  or $\Gamma$ implements the binary disequality relation $\neq$.
\end{lemma}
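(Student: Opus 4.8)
The plan is to localise the failure of intersection-closure to a bounded-size configuration, and then to realise one of the two target relations by a gadget built from $R$, equality constraints, and auxiliary tasks frozen to private local users. I would first record a structural fact about any regular constraint $c=(L,\Theta)$ with eligible-set family $E_R\subseteq 2^L$: it is \emph{self-consistent}, i.e.\ for every $T\in E_R$ the complement $L\setminus T$ is partitionable into members of $E_R$ (otherwise $T$ could not arise as a block of a satisfying plan). Next, since $R$ is not intersection-closed, a minimal counterexample to closure needs only two sets (if $X_1\cap\dots\cap X_k\notin E_R$ with $k$ minimal then $X_1\cap\dots\cap X_{k-1}\in E_R$, giving a two-set witness together with $X_k$); so fix $X,Y\in E_R$ with $I:=X\cap Y\notin E_R$, chosen to minimise $|X\cup Y|$ and then $|X|+|Y|$, and put $A=X\setminus Y$, $B=Y\setminus X$, $O=L\setminus(X\cup Y)$ (with $A,B\neq\emptyset$ and $O$ possibly empty; minimality also excludes eligible sets strictly between $I$ and $X$, resp.\ $I$ and $Y$).

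Using equality constraints to identify all tasks of $I$ with one task, and likewise for $A,B,O$, I can pass to a regular relation $R^{*}$ of arity $\le 4$ on $\{\alpha,\beta,\gamma,\delta\}$ (omitting $\delta$ if $O=\emptyset$) with $\{\alpha,\beta\},\{\alpha,\gamma\}\in E^{*}$, $\{\alpha\}\notin E^{*}$, and $E^{*}$ still self-consistent. If $O=\emptyset$, self-consistency applied to $\{\alpha,\beta\}$ and $\{\alpha,\gamma\}$ forces $\{\beta\},\{\gamma\}\in E^{*}$, and $R^{*}$ is itself a ternary relation matching Lemma~\ref{lemma:ternary}. If $O\neq\emptyset$, I would use the gadget that applies $R$ with the $I,A,B,O$-roles assigned to tasks $a,z,b,c$ respectively, where $z$ is an internal task authorised only for a fresh local user $w$ (forcing $z=w$, so the $A$-role always contributes a private singleton), and then project onto $\{a,b,c\}$. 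Going through the five partitions of $\{a,b,c\}$, the projection is a user-independent ternary relation that is always ineligible on $\{\{a\},\{b\},\{c\}\}$ (as $I\notin E_R$), eligible on $\{\{a,b\},\{c\}\}$ iff $\{Y,A,O\}\subseteq E_R$, and eligible on $\{\{a,c\},\{b\}\}$ iff $\{I\cup O,A,B\}\subseteq E_R$. Self-consistency constrains which of the relevant sets ($A,B,O$ and their pairwise and $I$-augmented unions) lie in $E_R$; a case analysis then shows that this gadget — or one of its role-permuted variants, or a variant in which a \emph{pair} of internal tasks is frozen to a common local user so as to occupy two roles whose abstract union is eligible — realises the Lemma~\ref{lemma:ternary} ternary, except in configurations where instead such a frozen-pair variant forces two target tasks to take distinct users, implementing $\neq$. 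Where the structure is still undetermined, $(X,I\cup O)$ is again a two-set witness and one iterates, terminating since $E_R$ is finite. As implementations compose and all size and user parameters stay constant, $\{R,=\}$ implements the required relation, and transitivity (cf.\ Lemma~\ref{lemma:implementppt}) finishes the argument.

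The main obstacle I anticipate is making this case analysis genuinely exhaustive: one must verify that in \emph{every} degenerate configuration of $E_R$ — in particular when many of $A\cup B$, $A\cup O$, $B\cup O$, or even the whole ground set, turn out to be eligible — there is a choice of which roles the target tasks and the frozen local-user tasks occupy that yields either the Lemma~\ref{lemma:ternary} relation or $\neq$. This is precisely where self-consistency of the eligible-set family must be exploited carefully, and is the delicate part of the completeness of the dichotomy.
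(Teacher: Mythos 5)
Your overall strategy (localise the closure failure to at most four merged blocks, then build a gadget with auxiliary tasks frozen to local users, with $\neq$ as an alternative outcome) points in the right direction, but there is a genuine gap at the first step of both branches, and the central case analysis is never carried out. The problem is that merging $I,A,B,O$ by equality constraints does \emph{not} give you the claimed control over the eligible partitions of the resulting relation $R^*$: for $\{\alpha,\beta\}$ to be $R^*$-eligible you need a partition of $L$ such as $\{X,B,O\}$ or $\{X,B\cup O\}$ to be eligible, i.e.\ you need $B$ (and $O$, or $B\cup O$) to lie in $E_R$, whereas $X\in E_R$ only guarantees \emph{some} eligible partition of $L\setminus X$, which need not respect your blocks. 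Your ``self-consistency'' observation is correct but strictly weaker than what you then use. Concretely, take $L=\{1,\dots,5\}$ and $E_R=\{\{1,2,3\},\{1,4,5\},\{2\},\{3\},\{4\},\{5\}\}$: this is a satisfiable regular relation that is not intersection-closed (take $X=\{1,2,3\}$, $Y=\{1,4,5\}$, $I=\{1\}$, $O=\emptyset$), yet after merging $I$, $A$, $B$ the relation $R^*$ has no eligible partition at all, so your $O=\emptyset$ branch, which asserts that $R^*$ directly matches Lemma~\ref{lemma:ternary}, fails. In the $O\neq\emptyset$ branch you correctly record the eligibility conditions for the gadget partitions (they require sets such as $A$, $O$, $I\cup O$ to lie in $E_R$, none of which is guaranteed), but then only promise a case analysis; the fallback of iterating on $(X,I\cup O)$ comes with no argument that the iteration terminates in a configuration where some gadget works, rather than merely terminating.

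The missing idea is a first-level dichotomy that precedes any merging: ask whether there exist disjoint $T,T'\in E_R$ that occur together as blocks of a single eligible partition but with $T\cup T'\notin E_R$. If so, merging $T$ and $T'$ into single tasks and freezing all remaining tasks to dummy local users already implements $\neq$ (this is exactly what happens in the example above, with $T=\{4\}$, $T'=\{5\}$). If not, then every coarsening of an eligible partition is eligible, and in particular $E_R$ is closed under complementation; it is \emph{this} property that makes the four-block merge legitimate, because $\{P,L\setminus P\}$ and $\{Q,L\setminus Q\}$ are then eligible partitions that respect the blocks $P\cap Q$, $P\setminus Q$, $Q\setminus P$, $L\setminus(P\cup Q)$, so the merged relation provably has $\{\{a,b\},\{c,d\}\}$ and $\{\{a,c\},\{b,d\}\}$ eligible while $\{a\}$ is ineligible. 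From there a short and genuinely exhaustive analysis (project out $d$, or freeze $d$ to a dummy user, or permute roles) yields the relation of Lemma~\ref{lemma:ternary}; and if neither alternative of the first dichotomy applies, $E_R$ is closed under union and complementation, hence under intersection, contradicting the hypothesis. Without establishing complement-closure first, the combinatorics of $E_R$ are too unconstrained for your gadgets to be analysed, which is precisely why your case analysis resists completion.
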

\begin{proof}
  Assume that $R$ is $r$-ary, and let $L=\{s_1,\ldots,s_r\}$ be a set of $r$ distinct tasks.
  We consider an application $c=R(s_1,\ldots,s_r)$ of $R$. Let $E_R$ be the $c$-eligible subsets of $L$. 
  First, if possible, let $T,T' \in E_R$ be disjoint sets such that there is a $c$-eligible plan $\pi\colon L \to U$
  with $T, T' \in L/\pi$, but $T \cup T' \notin E_R$. 
  Merge the tasks $T$ into a single task $s$ and the tasks $T'$ into a single task $t$
  by applications of $=$ (e.g., add constraints $(s_i=s_j)$ for every pair $s_i, s_j \in T$ and every pair $s_i, s_j \in T'$).
  Let $A(s)$ and $A(t)$ be arbitrary but non-empty, and let $A(s')=U_\bot$ for any other task $s'$, where $U_\bot$ is a 
  sufficiently large supply of dummy users (i.e., a set of users who are only authorized to perform 
  these tasks $s' \in L$). Then the resulting workflow schema $W$ has a valid plan $\pi$
  where $\pi(s)=u_s$ and $\pi(t)=u_t$ for any $u_s \in A(s)$ and $u_t \in A(t)$ with $u_s \neq u_t$,
  but no such plan where $u_s=u_t$. Hence we are done. 
  
  Second, if not, then we have that in any $c$-eligible partition $L/\pi$,
  we may freely merge parts to create a coarser $c$-eligible partition.
  In particular, for any $T \in E_R$ we have $(L \setminus T) \in E_R$.
  If there is any pair of sets $T, T' \in E_R$
  such that $T \cup T' \notin E_R$, then we proceed as follows.
  Let $P = L \setminus T$ and $Q = L \setminus T'$; then $P, Q \in E_R$
  but $P \cap Q \notin E_R$. Merge each of the sets $P \cap Q$,
  $P \setminus Q$, $Q \setminus P$, and $L \setminus (P \cup Q)$ 
  into single tasks, respectively, $a$, $b$, $c$ and $d$. 
  Let $R(a,b,c,d)$ be the resulting relation.
  Then $R$ defines a regular relation where $\{a\}$ is ineligible
  and where partitions $\{\{a,b\},\{c,d\}\}$ and $\{\{a,c\},\{b,d\}\}$
  are both eligible (by the eligible partitions $\{P, L \setminus P\}$ and $\{Q, L \setminus Q\}$).
  We will implement a relation matching Lemma~\ref{lemma:ternary}.
  Consider first the ternary relation $R'(a,b,c)= \exists d: R(a,b,c,d)$. 
  If $\{\{a\},\{b\},\{c\}\}$ is an ineligible partition for $R'$, then $R'$ is the relation we seek.
  Otherwise, we must have $\{a,d\} \in E_R$, hence also $\{b\}, \{c\} \in E_R$.   
  If $\{d\} \in E_R$, then we may restrict $A(d)$ to a supply of dummy users and again
  implement $R'(a,b,c)$ as in Lemma~\ref{lemma:ternary} on the remaining tasks.
  In the remaining case, using that $E_R$ is closed under complementation, 
  we find that $R$ is characterized precisely by excluding $\{a\}$ and $\{d\}$ 
  as eligible sets. In this case, $R'(a,b,c) = \exists d: R(b,a,d,c)$ produces
  a ternary relation $R'$ meeting the conditions of Lemma~\ref{lemma:ternary}.

  Otherwise, finally, $E_R$ is closed under arbitrary union and under complementation,
  and hence also under intersection, as  $T \cap T' = \overline{\overline{T} \cup \overline{T'}}$. 
  This contradicts our assumptions about $R$. 
\end{proof}

Using $\neq$, we can more easily construct a relation $R(a,b,c)$ as in Lemma~\ref{lemma:ternary}.

\begin{lemma} \label{lemma:withnotboth}
  Let $R$ be a regular relation which is not intersection-closed, and let $=$ and $\neq$ denote
  the binary equality and disequality relations.
  Then $\Gamma=\{R, =, \neq\}$ implements a relation $R'(a,b,c)$ as in Lemma~\ref{lemma:ternary}.
\end{lemma}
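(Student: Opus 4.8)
The plan is to feed the failure of intersection-closedness of $R$ into the construction underlying Lemma~\ref{lemma:ternary}, using the extra relation $\neq$ to dispose of the ``leftover'' coordinates that made the argument in Lemma~\ref{lemma:notboth} delicate. First I would handle the easy case via Lemma~\ref{lemma:notboth} applied to $\{R,=\}\subseteq\Gamma$: if $\{R,=\}$ implements a relation as in Lemma~\ref{lemma:ternary}, then so does $\Gamma$ and we are done. Otherwise $\{R,=\}$ implements $\neq$, and inspecting the proof of Lemma~\ref{lemma:notboth} this can only happen through its first case; so we may fix an application $c=R(s_1,\dots,s_r)$ of $R$ on $r$ distinct tasks together with a $c$-eligible partition $P=\{T,T',Z_1,\dots,Z_k\}$ of the scope $L$ in which $T$ and $T'$ are disjoint and non-empty and $T\cup T'\notin E_R$. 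In this branch we are allowed to use $\neq$ outright, since $\neq\in\Gamma$.

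Next I would extract a $4$-ary ``almost-OR'' from the non-closure of $R$. As noted in Section~\ref{section:dichotomy}, the application $c$ on distinct tasks already witnesses that $R$ is not intersection-closed, so there are $c$-eligible sets $A_1,A_2$ with $\emptyset\neq I:=A_1\cap A_2\notin E_R$. Using applications of $=$, contract the tasks of $I$ into one task $a$, those of $A_1\setminus A_2$ into $b$, those of $A_2\setminus A_1$ into $c$, and the remaining tasks of $L$ into $d$ (dropping $d$ if that set is empty; $a$ always remains since $I\neq\emptyset$). The resulting constraint $\widehat R(a,b,c,d)$ is again regular — a set $X\subseteq\{a,b,c,d\}$ is $\widehat R$-eligible exactly when the union of the corresponding original tasks is $c$-eligible — and it satisfies $\{a\}\notin E_{\widehat R}$ while $\{a,b\}=A_1$ and $\{a,c\}=A_2$ are $\widehat R$-eligible. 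So its all-singletons partition contains a forbidden block, whereas merging $a$ with $b$, or $a$ with $c$, is allowed.

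Finally I would turn $\widehat R$ into a genuine ternary relation by eliminating $d$. The gadget on free tasks $a,b,c$ takes a fresh copy of $\widehat R(a,b,c,d)$ with $d$ a local task, adds the constraint $(a\neq d)$, and gives $d$ a large private authorization list, so its projection onto $\{a,b,c\}$ is $R'(a,b,c)=\exists d\colon \widehat R(a,b,c,d)\wedge(a\neq d)$. The partition $\{\{a\},\{b\},\{c\}\}$ is never $R'$-eligible: the constraint $(a\neq d)$ keeps $d$ out of $a$'s block, so $\{a\}$ survives as a block and is not $\widehat R$-eligible. The partition $\{\{a,b\},\{c\}\}$ is $R'$-eligible whenever $L\setminus A_1$ is $c$-eligible (place $d$ in $c$'s block), and symmetrically $\{\{a,c\},\{b\}\}$ via $L\setminus A_2$. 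If one of these side conditions on the small sets ($L\setminus A_1$, $L\setminus A_2$, and in the variants $\{b\}$, $\{c\}$, $\{d\}$, $\{a,d\}$, $\dots$) fails, I would branch on which of them hold and correspondingly pin $d$ to private users, or add further $\neq$-constraints among $\{b,c,d\}$, or restart from a different (e.g.\ inclusion-minimal) witness of non-closure — in each branch either producing a relation meeting Lemma~\ref{lemma:ternary} directly, or forcing $E_R$ to be closed under the intersections in play, contradicting the choice of $R$.

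The step I expect to be the main obstacle is this final case analysis: proving that the branches are exhaustive and that the freedom afforded by $\neq$ (permanently separating $d$'s block from $a$'s) always suffices where the second case of Lemma~\ref{lemma:notboth} appealed to closure under complementation — the latter being unavailable precisely because we have landed in that lemma's first case. Everything else is routine contraction-and-projection bookkeeping, which is exactly why the construction becomes ``easier'' once $\neq$ is at our disposal.
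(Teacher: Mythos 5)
Your construction stalls exactly where you predict it will, and the missing case analysis is not a routine loose end --- it is the whole content of the lemma. Contracting $L$ along the Venn diagram of a non-closure witness $A_1,A_2$ gives you $\{a\}=A_1\cap A_2\notin E_{\widehat R}$ for free, but the \emph{positive} claims do not follow: $\{a,b\}$ is $\widehat R$-eligible only if $A_1$ extends to a full partition of $L$ into block-unions all lying in $E_R$ (e.g.\ $L\setminus A_1\in E_R$, or both $A_2\setminus A_1$ and $L\setminus(A_1\cup A_2)$ in $E_R$), and nothing in the hypotheses guarantees any of this. In Lemma~\ref{lemma:notboth} the analogous step is rescued by closure under complementation of $E_R$, which, as you yourself note, is precisely what is unavailable once the first case of that lemma applies. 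Your fallback --- branch on which small sets are eligible, pin $d$ to private users, add further $\neq$-constraints, or restart from another witness --- is open-ended: you give no argument that the branches are exhaustive, nor that each branch terminates in a relation matching Lemma~\ref{lemma:ternary} rather than in a contracted relation that happens to be intersection-closed without contradicting the choice of $R$ (non-closure of $R$ on $L$ does not transfer to the contracted domain).

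The paper closes this gap with a device your proposal lacks: \emph{minimality}. It does not contract along the Venn diagram at all. It merges pairs of tasks only as long as the counterexample property survives (sets $P,Q\in E_R$ with $P\cap\{a,b,c\}=\{a,b\}$, $Q\cap\{a,b,c\}=\{a,c\}$, $P\cap Q\notin E_R$), then chooses \emph{minimal} eligible sets $P'\subseteq P$ and $Q'\subseteq Q$ containing $a$ and imposes $(a\neq s)$ for every $s\in L\setminus(P'\cup Q')$ together with $(b'\neq c')$ for every $b'\in P'-a$ and $c'\in Q'-a$. These disequalities confine the block of $a$ in any eligible plan to a subset of $P'$ or of $Q'$, and minimality then forces that block to be exactly $P'$ or exactly $Q'$. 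This single fact yields at once the ineligibility of the all-singleton partition on $\{a,b'',c''\}$ and the eligibility of the two merged partitions witnessed by $P'$ and $Q'$, with no residual side conditions to branch on. If you want to salvage your write-up, replacing the Venn-diagram contraction and the existential elimination of $d$ by this minimal-set argument is the essential change; the rest of your bookkeeping (projection, local users for the leftover tasks) is fine.
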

\begin{proof}
Let $c=R(L)$ be an application of $R$ on task set $L$, and let $E_R \subseteq 2^L$ be the set of $c$-eligible subsets of $L$. 
Say that $R$ is a \emph{counterexample relation} with respect to tasks $a, b, c \in L$ 
if there is a set $P \in E_R$ with $P \cap \{a,b,c\}=\{a,b\}$ and a set $Q \in E_R$ with $Q \cap \{a,b,c\}=\{a,c\}$,
but $P \cap Q \notin E_R$. We will modify $R$ into a simpler counterexample relation, 
by imposing further regular constraints using binary relations $=$ and $\neq$,
to finally produce the sought-after relation $R'(a,b,c)$. 

The first modification we consider is to merge tasks $s, s' \in L$.
If there is any pair of tasks $s, s'$ such that merging $s$ and $s'$ (i.e., adding an equality
constraint $(s=s')$) still yields a counterexample relation, then merge $s$ and $s'$; repeat this exhaustively.
For the rest of the proof, we will treat merged tasks as a single task, and assume that $R$ is a minimal counterexample
with respect to merging operations. In particular, this implies $|P \cap Q|=1$ for any sets $P$ and $Q$ as above.

Next, we impose a set of binary disequality constraints $(s \neq s')$.
Let $P$ and $Q$ be sets as above, let $P \cap Q = \{a\}$, and let $P' \subseteq P$
be a minimal eligible set with $a \in P'$. Similarly let $Q' \subseteq Q$ 
be a minimal eligible set with $a \in Q'$. Impose a binary disequality constraint 
$(a \neq s)$ for any $s \in L \setminus (P' \cup Q')$,
and $(b' \neq c')$ for any $b' \in P'-a$ and $c' \in Q'-a$.
We argue that the resulting workflow, projected down to tasks $a,b'',c''$ for $b'' \in P'-a$ 
and $c'' \in Q'-a$, satisfies the conditions of Lemma~\ref{lemma:ternary}.
It is clear by $P'$ and $Q'$ that the partitions $\{\{a, b''\},\{c''\}\}$ and $\{\{a, c''\},\{b''\}\}$ 
are eligible; the added disequality constraints have no effect on this. 
On the other hand, there is no eligible partition where $a, b'', c''$ are all contained in 
different sets, as due to the disequality constraints, $a$ can only be contained in 
a set contained in either $P'$ or in $Q'$, and both $P'$ and $Q'$ were chosen as minimal.
Hence we have implemented a relation $R(a,b'',c'')$ as in Lemma~\ref{lemma:ternary}.
\end{proof}

\begin{theorem} \label{th:dichlower}
  Let $R$ be a regular relation which is not intersection-closed.
  Then \WSP{(\uppercase{$R$})} admits no kernel of size poly$(k+m)$
  unless the polynomial hierarchy collapses. 
\end{theorem}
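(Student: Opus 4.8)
The plan is to chain together the implementation machinery of Section~\ref{sec:lemma} with the two structural lemmas just proved and the kernel lower bound of Lemma~\ref{lemma:ternary}. Concretely, I would argue as follows. Since $R$ is a regular relation which is not intersection-closed, it is in particular satisfiable (an unsatisfiable relation is vacuously intersection-closed, as $E_R=\{\emptyset\}$). Apply Lemma~\ref{lemma:notboth} to $\Gamma_0 = \{R, =\}$: either $\Gamma_0$ already implements a ternary relation $R'(a,b,c)$ matching the hypothesis of Lemma~\ref{lemma:ternary}, or $\Gamma_0$ implements the binary disequality relation $\neq$. In the first case we are essentially done; in the second case, set $\Gamma_1 = \{R, =, \neq\}$ and invoke Lemma~\ref{lemma:withnotboth} to conclude that $\Gamma_1$ implements a relation $R'(a,b,c)$ satisfying the hypotheses of Lemma~\ref{lemma:ternary}.

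Next I would assemble the reduction. By Lemma~\ref{lemma:ternary}, \WSP{(\uppercase{$R'$})} admits no polynomial kernel in $k+m$ unless the polynomial hierarchy collapses. The goal is to transfer this lower bound to \WSP{(\uppercase{$R$})}. The subtlety is that the implementations produced above use the auxiliary relations $=$ (and possibly $\neq$) in addition to $R$, so a direct appeal to Lemma~\ref{lemma:implementppt} only gives a PPT from \WSP{(\uppercase{$R'$})} to \WSP{($\{R,=,\neq\}$)}, not to \WSP{(\uppercase{$R$})}. To close this gap I would observe that, because $R$ is regular and satisfiable, the binary equality relation $=$ is itself implementable — indeed trivially, since identifying two tasks $s,s'$ is exactly the merging operation of Lemma~\ref{lemma:merge}, which can be effected without introducing any constraint at all (merge $s'$ into $s$ and intersect authorization lists), and more to the point the relation $\neq$ that may be needed is implemented by $\{R,=\}$ precisely in the branch of Lemma~\ref{lemma:notboth} where $R'$ was not already obtained. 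Thus in every case the implementing language is a subset of $\{R,=\}$ together with relations implemented by $\{R,=\}$, so transitivity of implementations yields that $\{R\}$ (augmented only with the merging operation, which does not affect the constraint language) implements $R'$. Then Lemma~\ref{lemma:implementppt} gives a PPT from \WSP{(\uppercase{$R'$})} to \WSP{(\uppercase{$R$})} with respect to parameter $k+m$, and composing with Lemma~\ref{lemma:ternary} completes the proof.

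The main obstacle I anticipate is exactly this bookkeeping about which auxiliary relations are genuinely needed and whether they can be eliminated. In particular one must be careful that Lemma~\ref{lemma:implementppt} as stated requires the implementing language, so a clean statement would route everything through the observation that $=$ corresponds to task-merging (handled outside the constraint language by Lemma~\ref{lemma:merge}) and that $\neq$, when needed, has already been implemented by $\{R\}$ alone via the first branch of Lemma~\ref{lemma:notboth}. An alternative, cleaner route — which I would adopt if the transitivity argument gets delicate — is to prove a small strengthening: since $=$ can always be simulated by merging, it suffices to show that $\{R, \neq\}$-implementations can be carried out, and then to split into the two cases of Lemma~\ref{lemma:notboth} directly, in the first case applying Lemma~\ref{lemma:implementppt} with $\Gamma' = \{R\}$ and in the second first using the $\neq$-implementation to absorb $\neq$ into $R$-gadgets before applying Lemma~\ref{lemma:withnotboth}. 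Either way, once the implementation is established the kernel lower bound is immediate from Lemma~\ref{lemma:ternary}, so the entire weight of the argument is in verifying that no relation outside $\{R\}$ (modulo free merging) survives into the final reduction.
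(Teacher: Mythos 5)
Your proposal is correct and follows essentially the same route as the paper: Lemma~\ref{lemma:notboth} (falling back to Lemma~\ref{lemma:withnotboth} in the $\neq$ branch) to implement a relation matching Lemma~\ref{lemma:ternary}, Lemma~\ref{lemma:implementppt} to transfer the lower bound, and elimination of the auxiliary equality relation via the task-merging of Lemma~\ref{lemma:merge}. The paper packages that last step slightly differently---it first derives the lower bound for \WSP{(\uppercase{$R,=$})} and then gives a trivial PPT to \WSP{(\uppercase{$R$})} by exhaustive merging---but this is the same idea you describe.
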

\begin{proof}
  By Lemma~\ref{lemma:implementppt}, it suffices to show that $\Gamma=\{R\}$
  implements a relation $R'$ matching the conditions of Lemma~\ref{lemma:ternary}.
  First consider the set $\Gamma'=\{R,=\}$.
  By Lemma~\ref{lemma:notboth}, $\Gamma'$ implements either
  a relation $R'$ matching the conditions of Lemma~\ref{lemma:ternary}
  or the binary disequality relation $\neq$; in the latter case,
  we have that $\Gamma''=\{R,=,\neq\}$ implements a relation $R'$
  by Lemma~\ref{lemma:withnotboth}, hence so does $\Gamma'$
  by the transitivity of implementations. 
  We find that \WSP{(\uppercase{$R,=$})} admits no polynomial kernel unless the polynomial hierarchy collapses.

  Finally, there is a trivial PPT from \WSP{(\uppercase{$R,=$})} to \WSP{(\uppercase{$R$})} by merging tasks:
  Let $(s=s')$ be an equality constraint in an instance of \WSP{(\uppercase{$R,=$})}.
  We may then apply Lemma~\ref{lemma:merge} to $s$ and $s'$, producing an equivalent
  instance with modified authorization lists and fewer tasks. 
  Repeating this until no equality constraints remain in the instance 
  yields an instance of \WSP{(\uppercase{$R$})}. 
  Hence the same lower bound applies to \WSP{(\uppercase{$R$})}.
\end{proof}

This finishes the proof of Theorem~\ref{th:regulardich}. 

\begin{corollary} \label{cor:dich}
  Let $\Gamma$ be a set of regular relations. If $\Gamma$ is well-behaved, then \WSP{($\Gamma$)} admits a polynomial kernel in parameter $k+m$ if $\Gamma$ is intersection-closed, otherwise not, unless the polynomial hierarchy collapses. If $\Gamma$ is finite, then the same dichotomy holds for parameter $k$ instead of $k+m$.
\end{corollary}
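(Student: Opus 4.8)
The plan is to assemble Corollary~\ref{cor:dich} directly from the two halves of Theorem~\ref{th:regulardich}, which is the only real work; the corollary is essentially a restatement with the quantifier over $\Gamma$ moved outside. So first I would handle the positive direction. Suppose $\Gamma$ is well-behaved and every relation in $\Gamma$ is intersection-closed. Then every constraint occurring in a \WSP{($\Gamma$)} instance is an application of an intersection-closed regular relation, hence is itself regular and intersection-closed (using the remark, recorded in Section~\ref{section:dichotomy}, that an application of an intersection-closed relation is intersection-closed, and that conjunctions preserve this — though we do not even need the conjunction part here). Apply Theorem~\ref{th:kernel} (equivalently, the first half of Theorem~\ref{th:regulardich}) to reduce any instance to one with $k'\le k$ tasks, $n'\le k'$ users, and $m'\le m$ constraints in polynomial time. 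By Proposition~\ref{prop:kernelsize}, an instance with parameters bounded by $k+m+n$ (here $n\le k$, so this is just $k+m$) and with $\Gamma$ well-behaved has an encoding of size poly$(k+m)$; this is the claimed polynomial kernel. If in addition $\Gamma$ is finite, then by the same Proposition the kernel has size poly$(k)$, since finiteness of $\Gamma$ bounds the arity and hence the number of possible constraints by $k^{\Oh(1)}$.

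For the negative direction, suppose $\Gamma$ is not intersection-closed. Then by definition there is some $R\in\Gamma$ which is not intersection-closed; $R$ is regular since $\Gamma$ is a set of regular relations. Apply Theorem~\ref{th:dichlower} to this single relation $R$: \WSP{(\uppercase{$R$})} admits no kernel of size poly$(k+m)$ unless the polynomial hierarchy collapses. Since \WSP{(\uppercase{$R$})} is the restriction of \WSP{($\Gamma$)} to instances using only the relation $R$, a polynomial kernel for \WSP{($\Gamma$)} in parameter $k+m$ would immediately give one for \WSP{(\uppercase{$R$})} (the identity map is a valid PPT, or one simply observes that every \WSP{(\uppercase{$R$})} instance is a \WSP{($\Gamma$)} instance with the same parameters). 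Hence no such kernel exists unless the polynomial hierarchy collapses. The same conclusion holds verbatim for parameter $k$ when $\Gamma$ is finite, since a poly$(k)$ kernel is a fortiori a poly$(k+m)$ kernel, so the $k+m$ lower bound rules it out as well; alternatively one may note $\{R\}$ itself is a finite subset of $\Gamma$.

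The main obstacle, such as it is, is purely bookkeeping: checking that "well-behaved" and "intersection-closed" as properties of the set $\Gamma$ transfer correctly to the individual applications appearing in an instance, and confirming that the parameter in Proposition~\ref{prop:kernelsize} (which nominally includes $n$) is subsumed by $k+m$ after the user reduction has driven $n\le k$. Both points are already made explicitly in Section~\ref{sec:regularupper} and in the statement of Theorem~\ref{th:regulardich}, so no new argument is needed; the corollary follows by combining Theorems~\ref{th:kernel} and~\ref{th:dichlower} with Proposition~\ref{prop:kernelsize}, exactly as Theorem~\ref{th:regulardich} already does, and then specializing to finite $\Gamma$ for the parameter-$k$ clause.
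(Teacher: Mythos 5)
Your proposal is correct and follows exactly the route the paper intends: the corollary is assembled from Theorem~\ref{th:kernel} together with Proposition~\ref{prop:kernelsize} for the positive direction, and Theorem~\ref{th:dichlower} (via restriction to a single non-intersection-closed $R\in\Gamma$) for the negative direction, with the finite-$\Gamma$ case handled by the arity bound and the observation that a poly$(k)$ kernel is a fortiori a poly$(k+m)$ kernel. The paper gives no separate proof of the corollary, and your bookkeeping matches what Theorem~\ref{th:regulardich} and Section~\ref{sec:regularupper} already establish.
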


\subsection{On user bounds for WSP}\label{section:userbound} \label{sec:regularsummary}

In this section we return to the question of preprocessing \WSP{} down to a number of users that is polynomial in the number~$k$ of tasks. As seen above, the positive side of our kernel dichotomy relies directly on a procedure that reduces the number of users in an instance, while the lower bounds refer entirely to the \emph{total size} of the instance. Could there be a loophole here, allowing the number of users to be bounded without directly resulting in a polynomial kernel? Alas, it seems that while such a result cannot be excluded, it might not be very useful.

\begin{corollary}\label{cor:nouserred}
  Let $\Gamma$ be a set of user-independent relations containing at least one relation which is regular but not intersection-closed. 
  Unless the polynomial hierarchy collapses, any polynomial-time procedure that reduces the number of users in an instance down to poly$(k)$ must in some cases increase either the number $k$ of tasks, the number $m$ of constraints, or the coding length of individual constraints superpolynomially in $k+m$. 
\end{corollary}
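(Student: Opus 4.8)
The plan is to derive Corollary~\ref{cor:nouserred} as a direct consequence of Theorem~\ref{th:dichlower} together with Proposition~\ref{prop:kernelsize}, arguing by contradiction. Suppose $\Gamma$ contains a regular, non-intersection-closed relation $R$, and suppose for contradiction that there is a polynomial-time procedure that, on every instance of \WSP{($\Gamma$)} (in particular every instance of \WSP{(\uppercase{$R$})}), outputs an equivalent instance with $n' = \mathrm{poly}(k)$ users while keeping $k$, $m$, and the coding length of each individual constraint all bounded by $\mathrm{poly}(k+m)$ of their original values. I would first observe that, since \WSP{(\uppercase{$R$})} uses only the single fixed relation $R$, its arity is a constant, so the original instance has $m \le k^{\Oh(1)}$ applications; thus after the procedure we still have $k' + m' = \mathrm{poly}(k+m)$ and $n' = \mathrm{poly}(k+m)$.

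The key step is then to note that the hypothetical procedure, followed by the encoding argument in the proof of Proposition~\ref{prop:kernelsize}, yields a polynomial kernel (or at least a polynomial compression) for \WSP{(\uppercase{$R$})} under parameter $k+m$: the output instance has $k'$ tasks, $n'$ users, and $m'$ constraints, each describable in $\mathrm{poly}(k'+n')$ bits by the assumed bound on coding length (note that a single fixed relation $R$ of constant arity is trivially well-behaved in the sense needed, or one can simply encode each constraint's scope explicitly in $\Oh(r\log k')$ bits since the relation itself is fixed), so the whole output has size $\mathrm{poly}(k'+m'+n') = \mathrm{poly}(k+m)$. This contradicts Theorem~\ref{th:dichlower}, which states that \WSP{(\uppercase{$R$})} admits no kernel of size $\mathrm{poly}(k+m)$ unless the polynomial hierarchy collapses. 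Hence no such procedure can exist without collapsing the polynomial hierarchy, which is exactly the statement of the corollary: any polynomial-time user-reduction to $\mathrm{poly}(k)$ users must, in some cases, blow up $k$, $m$, or the per-constraint coding length superpolynomially in $k+m$.

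The main obstacle here is not any deep technical difficulty but rather the careful bookkeeping of parameters: I need to make sure that bounding all three of $k$, $m$, and per-constraint coding length by $\mathrm{poly}(k+m)$ really does force the total output size to be $\mathrm{poly}(k+m)$ once the users are also $\mathrm{poly}(k)$ many. The subtlety is that authorization lists are arbitrary and take $\Oh(k'n')$ bits, which is fine since both $k'$ and $n'$ are polynomially bounded; the only genuinely unbounded quantity one must rule out is the encoding of constraints, which is precisely what the "coding length of individual constraints" clause in the corollary's statement handles. I would also remark, to justify why the statement is phrased in terms of user-independent (not merely regular) $\Gamma$, that Theorem~\ref{th:dichlower} only needs the single bad relation $R$ to be regular and non-intersection-closed; the rest of $\Gamma$ being user-independent does not help, since the lower-bound instances can be taken to use only $R$. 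Thus the corollary follows with essentially no new work beyond assembling the earlier results.
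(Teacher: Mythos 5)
Your proposal is correct and follows essentially the same route as the paper: both arguments observe that a natural encoding of a \WSP{} instance has size roughly $\Oh(kn + km\ell)$ (with $\ell$ the per-constraint coding length), so a user reduction to poly$(k)$ users that also keeps $k$, $m$, and $\ell$ polynomially bounded in $k+m$ would yield a poly$(k+m)$-size kernel for \WSP{(\uppercase{$R$})}, contradicting Theorem~\ref{th:dichlower}. The extra bookkeeping you supply (constant arity of the fixed relation $R$, the $\Oh(k'n')$ cost of authorization lists) is consistent with, and slightly more explicit than, the paper's one-line encoding argument.
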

\begin{proof}
  Let $R \in \Gamma$ be a constraint that is not intersection-closed. By Theorem~\ref{th:dichlower}, under our assumption there is no polynomial-time procedure that reduces every instance of \WSP{(\uppercase{$R$})} down to size $p(k+m)$ for any $p(t)=t^{\Oh(1)}$. Also note that a natural encoding of \WSP{} instances has coding length $\Oh(kn+km\ell )$, where $\ell$ is the largest coding length of an individual constraint. Hence, a procedure which bounds $n=k^{\Oh(1)}$ must sometimes increase one of the parameters $k$, $m$ and $\ell$ to $(k+m)^{\omega(1)}$. 
\end{proof}

\section{Conclusion} \label{section:conclusion}

\setlength{\tabcolsep}{0.12cm}

\begin{table}[t]
\centering
\begin{tabular}{@{}lcccccccc@{}}
\toprule
 & regular & 
$\cap$-closed &
\multicolumn{2}{c}{\begin{tabular}{@{}c@{}}poly($k$) user\\reduction\end{tabular}} &
\multicolumn{2}{c}{\begin{tabular}{@{}c@{}}bounded arity\\ resp.\ finite $\Gamma$ \end{tabular}} & 
\multicolumn{2}{c@{}}{\begin{tabular}{@{}c@{}}well-behaved\\infinite $\Gamma$\end{tabular}}\\
\cmidrule[\heavyrulewidth](lr){2-2}
\cmidrule[\heavyrulewidth](lr){3-3}
\cmidrule[\heavyrulewidth](lr){4-5}
\cmidrule[\heavyrulewidth](lr){6-7}
\cmidrule[\heavyrulewidth](l){8-9}
\begin{tabular}{@{}l@{}}$(\neq,T,T')$\\ $ (\geq 2,T)$ \end{tabular} & 
yes & 
yes & 
yes &
\cite{CrGuYeJournal}& 
PK($k$) &
\cite{CrGuYeJournal}&  
PK($k$+$m$) &
Cor.\ref{cor:dich}\\
\cmidrule(lr){2-2}
\cmidrule(lr){3-3}
\cmidrule(lr){4-5}
\cmidrule(lr){6-7}
\cmidrule(l){8-9}
\begin{tabular}{@{}l@{}}$(1,t_u,T)$\\ $(t_l,t_u,T)$ \end{tabular} & 
yes & 
\begin{tabular}{@{}c@{}}yes \\ no\end{tabular} & 
\begin{tabular}{@{}c@{}}yes\\ no \end{tabular} &
\begin{tabular}{@{}c@{}}\cite{CrGuYeJournal}\\Cor.\ref{cor:nouserred}\end{tabular} &  
\begin{tabular}{@{}c@{}}PK($k$)\\ no PK($k$+$m$) \end{tabular} &
\begin{tabular}{@{}c@{}}\cite{CrGuYeJournal}\\Thm.\ref{th:regulardich}\end{tabular} & 
\begin{tabular}{@{}c@{}} PK($k$+$m$) \\ no PK($k$+$m$) \end{tabular} & 
Cor.\ref{cor:dich}\\
\cmidrule(lr){2-2}
\cmidrule(lr){3-3}
\cmidrule(lr){4-5}
\cmidrule(lr){6-7}
\cmidrule(l){8-9}
\begin{tabular}{@{}l@{}}$(=,s,T')$\\ $(=,T,T')$ \end{tabular} & 
\begin{tabular}{@{}c@{}}yes\\ no \end{tabular} & 
\begin{tabular}{@{}c@{}}no\\n.a.\ \end{tabular} & 
no & 
Cor.\ref{cor:nouserred}& 
no PK($k$+$m$) &
Thm.\ref{th:kernel}& 
no PK($k$+$m$) &
Cor.\ref{cor:dich}\\
\cmidrule(lr){2-2}
\cmidrule(lr){3-3}
\cmidrule(lr){4-5}
\cmidrule(lr){6-7}
\cmidrule(l){8-9}
$(\geq t,T)$  & no & n.a.\ & yes & \cite{CrGuYeJournal} & PK($k$) & \cite{CrGuYeJournal} & PK($k$+$m$)&Prop.\ref{prop:kernelsize}\\
%
% \midrule
%
$(\leq t,T)$ & no & n.a.\ & no & Cor.\ref{cor:nouserred} & no PK($k$+$m$) & Cor.\ref{cor:lbs} & no PK($k$+$m$)& Cor.\ref{cor:lbs}\\
\cmidrule(lr){2-2}
\cmidrule(lr){3-3}
\cmidrule(lr){4-5}
\cmidrule(lr){6-7}
\cmidrule(l){8-9}
reg.+$\cap$-cl. & yes & yes & yes &Thm.\ref{th:kernel}& PK($k$) & Thm.\ref{th:regulardich} & PK($k+m)$ & Cor.\ref{cor:dich}\\
regular & yes & no & no & Cor.\ref{cor:nouserred} & no PK($k$+$m$) & Thm.\ref{th:regulardich} & no PK($k$+$m$) & Cor.\ref{cor:dich}\\
\bottomrule
\end{tabular}
\caption{\label{table:overview}Overview of results for typical user-independent constraints. We recall that the \WSP{} problem is \FPT with respect to~$k$ when all constraints are user-independent.}%
\end{table}

In this paper, we have considered kernelization properties of the workflow satisfiability problem \WSP{($\Gamma$)} restricted to use only certain types $R \in \Gamma$ of constraints. We have focused on the case that all relations $R \in \Gamma$ are regular. For this case, we showed that \WSP{($\Gamma$)} admits a reduction down to $n' \leq k$ users if every $R \in \Gamma$ is \emph{intersection-closed} (and obeys some natural assumptions on efficiently computable properties), otherwise (under natural restrictions) no such reduction is possible unless the polynomial hierarchy collapses. In particular, this implies a dichotomy on the kernelizability of \WSP{} under the parameters $k$ for finite $\Gamma$, and $k+m$ for infinite languages $\Gamma$ (subject to the aforementioned computability assumptions). 
This extends kernelization results of Crampton et al.~\cite{CrGuYeJournal}, and represents the first kernelization lower bounds for regular constraint languages. Some results are summarized in Table~\ref{table:overview}.

An interesting open problem is to extend this result beyond regular constraints, e.g., to general user-independent constraints.


\begin{thebibliography}{10}

\bibitem{ansi-rbac04}
{American National Standards Institute}.
\newblock {ANSI INCITS} 359-2004 for role based access control, 2004.

\bibitem{BeboFe01}
Elisa Bertino, Piero~A. Bonatti, and Elena Ferrari.
\newblock {TRBAC}: A temporal role-based access control model.
\newblock {\em ACM Trans. Inf. Syst. Secur.}, 4(3):191--233, 2001.

\bibitem{BeFeAt99}
Elisa Bertino, Elena Ferrari, and Vijayalakshmi Atluri.
\newblock The specification and enforcement of authorization constraints in
  workflow management systems.
\newblock {\em ACM Trans. Inf. Syst. Secur.}, 2(1):65--104, 1999.

\bibitem{BetzlerBN10}
Nadja Betzler, Robert Bredereck, and Rolf Niedermeier.
\newblock Partial kernelization for rank aggregation: Theory and experiments.
\newblock In {\em IPEC}, volume 6478 of {\em LNCS}, pages 26--37. Springer,
  2010.

\bibitem{CoCrGaGuJo14}
D.~Cohen, J.~Crampton, A.~Gagarin, G.~Gutin, and M.~Jones.
\newblock Engineering algorithms for workflow satisfiability problem with
  user-independent constraints.
\newblock In {\em FAW 2014}, volume 8497 of {\em LNCS}, pages 48--59. Springer,
  2014.

\bibitem{CoCrGaGuJo13}
David Cohen, Jason Crampton, Andrei Gagarin, Gregory Gutin, and Mark Jones.
\newblock Iterative plan construction for the workflow satisfiability problem.
\newblock {\em J. AI Res.}, to appear.

\bibitem{Cr05}
Jason Crampton.
\newblock A reference monitor for workflow systems with constrained task
  execution.
\newblock In {\em SACMAT}, pages 38--47. ACM, 2005.

\bibitem{CCGJR}
Jason Crampton, Robert Crowston, Gregory Gutin, Mark Jones, and M.~S.
  Ramanujan.
\newblock Fixed-parameter tractability of workflow satisfiability in the
  presence of seniority constraints.
\newblock In {\em FAW-AAIM}, volume 7924 of {\em LNCS}, pages 198--209.
  Springer, 2013.

\bibitem{CrGuYeJournal}
Jason Crampton, Gregory Gutin, and Anders Yeo.
\newblock On the parameterized complexity and kernelization of the workflow
  satisfiability problem.
\newblock {\em ACM Trans. Inf. Syst. Secur.}, 16(1):4, 2013.

\bibitem{DellM14}
Holger Dell and Dieter van Melkebeek.
\newblock Satisfiability allows no nontrivial sparsification unless the
  polynomial-time hierarchy collapses.
\newblock {\em J. {ACM}}, 61(4):23, 2014.

\bibitem{DomLS09}
Michael Dom, Daniel Lokshtanov, and Saket Saurabh.
\newblock Incompressibility through colors and {ID}s.
\newblock In {\em ICALP (1)}, volume 5555 of {\em LNCS}, pages 378--389.
  Springer, 2009.

\bibitem{DowneyF13}
Rodney~G. Downey and Michael~R. Fellows.
\newblock {\em Fundamentals of Parameterized Complexity}.
\newblock Texts in Computer Science. Springer, 2013.

\bibitem{HermelinKSWW13}
Danny Hermelin, Stefan Kratsch, Karolina Soltys, Magnus Wahlstr{\"o}m, and
  Xi~Wu.
\newblock A completeness theory for polynomial ({T}uring) kernelization.
\newblock In {\em IPEC}, volume 8246 of {\em LNCS}, pages 202--215. Springer,
  2013.

\bibitem{JoBeLaGh05}
James Joshi, Elisa Bertino, Usman Latif, and Arif Ghafoor.
\newblock A generalized temporal role-based access control model.
\newblock {\em IEEE Trans. Knowl. Data Eng.}, 17(1):4--23, 2005.

\bibitem{LokshtanovMS12}
Daniel Lokshtanov, Neeldhara Misra, and Saket Saurabh.
\newblock Kernelization - preprocessing with a guarantee.
\newblock In {\em The Multivariate Algorithmic Revolution and Beyond}, volume
  7370 of {\em LNCS}, pages 129--161. Springer, 2012.

\bibitem{SaCoFeYo96}
Ravi~S. Sandhu, Edward~J. Coyne, Hal~L. Feinstein, and Charles~E. Youman.
\newblock Role-based access control models.
\newblock {\em IEEE Computer}, 29(2):38--47, 1996.

\bibitem{WangLi10}
Qihua Wang and Ninghui Li.
\newblock Satisfiability and resiliency in workflow authorization systems.
\newblock {\em ACM Trans. Inf. Syst. Secur.}, 13(4):40, 2010.

\end{thebibliography}
\end{document}